\long\def\symbolfootnote[#1]#2{\begingroup%
\def\thefootnote{\fnsymbol{footnote}}\footnote[#1]{#2}\endgroup}
\begin{document}


\newcommand{\nk}[1]{{{#1}_n}}
\newcommand{\zk}[1]{{{#1}_0}}
\newcommand{\SDR}{{\text{SDR}}}
\newcommand{\SNR}{{\text{SNR}}}
\newcommand{\tSNR}{\SNR}
\newcommand{\trho}{\tilde\rho}
\newcommand{\comb}[2]{{#1 \choose #2}}
\newcommand{\uzer}{\underline{0}}
\newcommand{\uV}{\underline{V}}
\newcommand{\uA}{\underline{A}}
\newcommand{\uD}{\underline{D}}
\newcommand{\uv}{\underline{v}}
\newcommand{\uT}{\underline{T}}
\newcommand{\ut}{\underline{t}}
\newcommand{\ur}{\underline{r}}
\newcommand{\uR}{\underline{R}}
\newcommand{\uc}{\underline{c}}
\newcommand{\uC}{\underline{C}}
\newcommand{\ul}{\underline{l}}
\newcommand{\uL}{\underline{L}}
\newcommand{\uh}{\underline{h}}
\newcommand{\uH}{\underline{H}}
\newcommand{\ue}{\underline{e}}
\newcommand{\uE}{\underline{E}}
\newcommand{\uG}{\underline{G}}
\newcommand{\ug}{\underline{g}}
\newcommand{\uz}{\underline{z}}
\newcommand{\uZ}{\underline{Z}}
\newcommand{\uu}{\underline{u}}
\newcommand{\uU}{\underline{U}}
\newcommand{\uj}{\underline{j}}
\newcommand{\uJ}{\underline{J}}
\newcommand{\uX}{\underline{X}}
\newcommand{\ux}{\underline{x}}
\newcommand{\uY}{\underline{Y}}
\newcommand{\uy}{\underline{y}}
\newcommand{\uW}{\underline{W}}
\newcommand{\uw}{\underline{w}}
\newcommand{\uth}{\underline{\theta}}
\newcommand{\uTh}{\underline{\theta}}
\newcommand{\uph}{\underline{\phi}}
\newcommand{\ual}{\underline{\alpha}}
\newcommand{\uxi}{\underline{\xi}}
\newcommand{\us}{\underline{s}}
\newcommand{\uS}{\underline{S}}
\newcommand{\un}{\underline{n}}
\newcommand{\uN}{\underline{N}}
\newcommand{\up}{\underline{p}}
\newcommand{\uq}{\underline{q}}
\newcommand{\uf}{\underline{f}}
\newcommand{\ua}{\underline{a}}
\newcommand{\ub}{\underline{b}}
\newcommand{\uDelta}{\underline{\Delta}}

\newcommand{\cA}{{\cal A}}
\newcommand{\tcA}{\tilde{\cA}}
\newcommand{\cB}{{\cal B}}
\newcommand{\cC}{{\cal C}}
\newcommand{\cc}{{\cal c}}
\newcommand{\cD}{{\cal D}}
\newcommand{\cE}{{\cal E}}
\newcommand{\cF}{{\cal F}}
\newcommand{\cH}{{\cal H}}
\newcommand{\cI}{{\cal I}}
\newcommand{\cK}{{\cal K}}
\newcommand{\cL}{{\cal L}}
\newcommand{\cN}{{\cal N}}
\newcommand{\cP}{{\cal P}}
\newcommand{\cQ}{{\cal Q}}
\newcommand{\cR}{{\cal R}}
\newcommand{\cS}{{\cal S}}
\newcommand{\cs}{{\cal s}}
\newcommand{\cT}{{\cal T}}
\newcommand{\ct}{{\cal t}}
\newcommand{\cU}{{\cal U}}
\newcommand{\cV}{{\cal V}}
\newcommand{\cW}{{\cal W}}
\newcommand{\cX}{{\cal X}}
\newcommand{\cx}{{\cal x}}
\newcommand{\cY}{{\cal Y}}
\newcommand{\cy}{{\cal y}}
\newcommand{\cZ}{{\cal Z}}

\newcommand{\Nu}{{\mathcal V}}

\newcommand{\tA}{\tilde{A}}
\newcommand{\tE}{\tilde{E}}
\newcommand{\tZ}{\tilde{Z}}
\newcommand{\tz}{\tilde{z}}
\newcommand{\tQ}{\tilde{Q}}
\newcommand{\tR}{\tilde{R}}
\newcommand{\hX}{\hat{X}}
\newcommand{\hY}{\hat{Y}}
\newcommand{\hZ}{\hat{Z}}
\newcommand{\huX}{\hat{\uX}}
\newcommand{\huY}{\hat{\uY}}
\newcommand{\huZ}{\hat{\uZ}}
\newcommand{\inthalf}[1]{\int_{-\frac{1}{2}}^{\frac{1}{2}} #1 df}
\newcommand{\indp}{\underline{\; \| \;}}
\newcommand{\diag}{\mbox{diag}}
\newcommand{\sumk}{\sum_{k=1}^{K}}
\newcommand{\beq}[1]{\begin{equation}\label{#1}}
\newcommand{\eeq}{\end{equation}}
\newcommand{\req}[1]{(\ref{#1})}
\newcommand{\beqn}[1]{\begin{eqnarray}\label{#1}}
\newcommand{\eeqn}{\end{eqnarray}}
\newcommand{\limn}{\lim_{n \rightarrow \infty}}
\newcommand{\limN}{\lim_{N \rightarrow \infty}}
\newcommand{\limr}{\lim_{r \rightarrow \infty}}
\newcommand{\limd}{\lim_{\delta \rightarrow \infty}}
\newcommand{\limM}{\lim_{M \rightarrow \infty}}
\newcommand{\limsupn}{\limsup_{n \rightarrow \infty}}
\newcommand{\imii}{\int_{-\infty}^{\infty}}
\newcommand{\imix}{\int_{-\infty}^x}
\newcommand{\ioi}{\int_o^\infty}
\newcommand{\bphi}{\mbox{\boldmath \begin{math}\phi\end{math}}}
\newcommand{\bth}{\mbox{\boldmath \begin{math}\theta\end{math}}}
\newcommand{\bhth}{\mbox{\boldmath \begin{math}\hat{\theta}\end{math}}}
\newcommand{\bg}{\mbox{\boldmath \begin{math}g\end{math}}}
\newcommand{\bA}{{\bf A}}
\newcommand{\ba}{{\bf a}}
\newcommand{\bB}{{\bf B}}
\newcommand{\bb}{{\bf b}}
\newcommand{\bc}{{\bf c}}
\newcommand{\bd}{{\bf d}}
\newcommand{\bD}{{\bf D}}
\newcommand{\bE}{{\bf E}}
\newcommand{\bff}{{\bf f}}
\newcommand{\bG}{{\bf G}}
\newcommand{\bW}{{\bf W}}
\newcommand{\bM}{{\bf M}}
\newcommand{\bi}{{\bf i}}
\newcommand{\bl}{{\bf l}}
\newcommand{\bm}{{\bf m}}
\newcommand{\bn}{{\bf n}}
\newcommand{\bp}{{\bf p}}
\newcommand{\bs}{{\bf s}}
\newcommand{\bt}{{\bf t}}
\newcommand{\bu}{{\bf u}}
\newcommand{\bx}{{\bf x}}
\newcommand{\by}{{\bf y}}
\newcommand{\bz}{{\bf z}}
\newcommand{\bC}{{\bf C}}
\newcommand{\bI}{{\bf I}}
\newcommand{\bJ}{{\bf J}}
\newcommand{\bN}{{\bf N}}
\newcommand{\bS}{{\bf S}}
\newcommand{\bT}{{\bf T}}
\newcommand{\bU}{{\bf U}}
\newcommand{\bV}{{\bf V}}
\newcommand{\bv}{{\bf v}}
\newcommand{\bw}{{\bf w}}
\newcommand{\bX}{{\bf X}}
\newcommand{\bY}{{\bf Y}}
\newcommand{\bZ}{{\bf Z}}
\newcommand{\oI}{\overline{I}}
\newcommand{\oD}{\overline{D}}
\newcommand{\oh}{\overline{h}}
\newcommand{\oV}{\overline{V}}
\newcommand{\oR}{\overline{R}}
\newcommand{\oH}{\overline{H}}
\newcommand{\ol}{\overline{l}}
\newcommand{\E}{{\cal E}_d}
\newcommand{\thref}[1]{Theorem \ref{#1}}
\newcommand{\figref}[1]{Figure \ref{#1}}
\newcommand{\secref}[1]{Section \ref{#1}}

\newcommand{\round}{\mathop{\mathrm{round}}}
\newcommand{\var}{\mathop{\mathrm{Var}}}

\newcommand{\ej}{{e^{j2\pi f}}}

\newcommand{\yesindent}{\hspace*{\parindent}}   
\newcommand{\pderiv}[2]{\frac{ \partial {#1}}{ \partial {#2}}}
\newcommand{\overr}[2]{\left({\begin{array}{l}#1\\#2\end{array}}\right)}
\newcommand{\Ddef}{\stackrel{\Delta}{=}}

\pagestyle{plain}
\newcommand{\bQ}{{\bf Q}}
\newcommand{\bq}{{\bf q}}
\newcommand{\el}{\ell}
\newcommand{\linf}{{\el\rightarrow\infty}}
\renewcommand{\thesection}{\Roman{section}}

\newtheorem{theorem}{Theorem}
\newtheorem{prop}{Proposition}
\newtheorem{cor}{Corollary}
\newtheorem{lemma}{Lemma}
\newtheorem{conj}{Conjucture}
\newtheorem{assume}{Assumption}
\newtheorem{definition}{Definition}

\newcommand{\ccaption}{\caption*~{Figure~\thefigure: }}
\newcommand{\tcaption}{\caption*~{Table~\thetable: }}


\title{Analog Matching of Colored Sources to Colored Channels
\symbolfootnote[2]{Parts of this work were presented at ISIT2006,
Seattle, WA, July 2006 and at ISIT2007, Nice, France, July 2007.
This work was supported by the Israeli Science Foundation (ISF)
under grant \# 1259/07, and by the Advanced Communication Center
(ACC). The first author was also supported by a fellowship of the
Yitzhak and Chaya Weinstein Research Institute for Signal Processing
at Tel Aviv University. }}

\author{Yuval Kochman and Ram Zamir \\
Dept. Electrical Engineering - Systems, Tel Aviv University }

\maketitle

\begin{abstract}
Analog (uncoded) transmission provides a simple and robust scheme for
communicating a Gaussian source over a Gaussian channel under the
mean squared error (MSE) distortion measure. Unfortunately, its
performance is usually inferior to the all-digital, separation-based
source-channel coding solution, which requires exact knowledge of
the channel at the encoder. The loss comes from the fact that except
for very special cases, e.g. white source and channel of matching
bandwidth (BW), it is impossible to achieve perfect matching of
source to channel and channel to source by linear means. We show
that by combining prediction and modulo-lattice operations, it is possible to
match any colored Gaussian source to any colored Gaussian noise channel (of possibly different BW),
hence achieve Shannon's optimum attainable performance $R(D)=C$.
Furthermore, when the source and channel BWs are equal (but
otherwise their spectra are arbitrary), this scheme is asymptotically
robust in the sense that for high signal-to-noise ratio a single 
encoder (independent of the noise variance) achieves the optimum performance. 
The derivation is based upon a
recent modulo-lattice modulation scheme for transmitting a Wyner-Ziv
source over a dirty-paper channel.
\end{abstract}

\vspace{5mm} \textbf{keywords:} joint source/channel coding, analog
transmission, Wyner-Ziv problem, writing on dirty paper, modulo
lattice modulation, MMSE estimation, prediction, unknown SNR,
broadcast channel, ISI channel, bandwidth expansion/reduction.

\section{Introduction} \label{Sec_Intro}


Digital transmission of analog sources relies, at least from a
theoretical point of view, on Shannon's source-channel separation
principle.
Being both optimal and easy to implement, digital techniques replace
today traditional analog communication even in areas like voice
telephony, radio and television. This trend ignores, however, the
fact that the separation principle does not hold for communication
networks, and in particular for broadcast channels and compound
channels \cite{CoverBook,Ziv70,TrottITW96}. Indeed, due to both
practical and theoretical reasons, {\em joint} source-channel coding
and hybrid digital-analog schemes are constantly receiving attention
of researchers in the academia and the industry.

In this work we consider transmission under the mean-squared error (MSE) distortion criterion, of a general
stationary Gaussian source over a power-constrained channel with inter-symbol interference (ISI), i.e. the transmitted signal is passed through some linear filter, and additive white Gaussian
noise (AWGN). \footnote{It turns out, that for the purpose of analysis it is more convenient to use a colored-noise channel model rather than an ISI one; this is deferred to \secref{Sec_Prel}.} 

Shannon's joint source-channel coding theorem implies that the
optimal (i.e., minimum distortion) performance $D^{opt}$ is given by
\begin{equation}
\label{OPTA1} R\left(D^{opt}\right)=C,
\end{equation}
where $R(D)$  
is the rate-distortion function of the source at MSE
distortion $D$, and
$C = C(P)$  
is the channel capacity at power-constraint
$P$, both given by the well-known water-filling solutions
\cite{CoverBook}.
By Shannon's separation principle, (\ref{OPTA1}) can be achieved by
a system consisting of source and channel coding schemes. This
system usually requires large delay and complex digital codes. An additional
serious drawback of the all-digital system is that it suffers
from a ``threshold effect'': if the channel noise turns out to be
higher than expected, then the reconstruction will suffer from very
large distortion, while if the channel has lower noise than
expected, then there is no improvement in the distortion
\cite{Ziv70,TrottITW96,ChenWornell98}.

In contrast, analog communication techniques (like amplitude or
frequency modulation \cite{CouchBook}) are not sensitive to exact
channel knowledge at the transmitter. Moreover, in spite of their
low complexity and delay, they are sometimes optimal: if we are
allowed one channel use per source sample, and the source and noise are white
(i.e. have i.i.d. samples), then a ``single-letter'' coding scheme
achieves the optimum performance of (\ref{OPTA1}), see e.g.
\cite{Goblick65}. In this scheme, the transmitter consists of
multiplication by a constant factor that adjusts the source to the
power constraint $P$, so it is independent of the channel
parameters. Only the receiver needs to know the power of the noise in order to
optimally estimate the source from the noisy channel output (by
multiplying by the ``Wiener coefficient'').

For the case of {\em colored} sources and channels, however, such a
simple solution is not available, as single-letter codes are only
optimal in very special scenarios \cite{ToCode}. A particular case is when the
channel bandwidth is
not equal to the source bandwidth, but otherwise they are
white (i.e., a white source is sent through an AWGN channel with some average number of channel uses per source
sample). As it turns out, even if we consider more
general linear transmission schemes, \cite{BergerOptimalPAM}, still
(\ref{OPTA1}) is not achievable in the general colored
case.
How far do we need to deviate from ``analog'' transmission in order
to achieve optimal performance in the colored case? More
importantly, can we still achieve full robustness?


In this work we propose and investigate a \emph{semi-analog} transmission
scheme. This scheme achieves the optimum performance of
(\ref{OPTA1}) for {\em any} colored source and channel pair without explicit digital coding, hence
we call it the \emph{Analog Matching} (AM) scheme. Furthermore, for the
matching bandwidth case ($B_C = B_S$, but arbitrary source and channel spectra), we show that the Analog
Matching transmitter is \emph{asymptotically robust} in the high
signal-to-noise ratio (SNR) regime, in the sense that it becomes
invariant to the variance of the channel noise. Thus, in this
regime, the perfect SNR-invariant matching property of white sources
and channels \cite{Goblick65} generalizes to the equal-BW colored
case.


Previous work on joint source/channel coding for the
BW-mismatch/colored setting mostly consists of hybrid digital analog
(HDA) solutions, which involve splitting the source or channel into
frequency bands, or using a superposition of encoders (see
\cite{ShamaiVerduZamir,MittalPhamdo,TzvikaBroadcast,Puri2005,NarayananCaireReport}
and references therein), mostly for the cases of bandwidth expansion
($B_C > B_S$) and bandwidth compression ($B_C < B_S$) with white
spectra. Most of these solutions, explicitly or
implicitly, allocate different power and bandwidth resources to
analog and digital source representations, thus they still employ
full \emph{digital coding}. 
Other works \cite{ChenWornell98,Vaishampayan2003} treat
bandwidth expansion by mapping each source sample to a sequence of
channel inputs independently; by the scalar nature of these mappings, they do not aim at optimal performance.

In contrast to HDA solutions, the AM scheme treats the
source and channel in the \emph{time domain},  using linear {\em prediction}, thus it also has the potential of shorter delay. Furthermore, it does not involve
any quantization of the source or digital channel code, but rather it applies {\em modulo-lattice
arithmetic} to analog signals. This modulation allows to take advantage of side information - here based on prediction - while keeping the analog nature of transmission.

\begin{table}[h]
\centering
\begin{tabular}{| c || c | c |}
\hline
Problem & Conventional prediction & Side-information based solution \\
\hline\hline
Source coding & DPCM compression & WZ video coding \\
\hline
Channel coding & FFE-DFE receiver & Dirty-paper coding = precoding \\
\hline
Joint source-channel coding & Does not exist & Analog matching \\ 
\hline
\end{tabular}
\tcaption{Information-Theoretic time-domain solutions to colored
Gaussian source and channel problems.} \label{si_problems_table}
\end{table}

Table~\ref{si_problems_table} demonstrates
the place of the Analog Matching scheme within
information-theoretic time-domain schemes. For the separate colored Gaussian source and channel problems, digital coding schemes, based upon
the combination of prediction and memoryless codebooks, are optimal: differential
pulse code modulation (DPCM) in source coding (see \cite{Jayant84} for basic properties and \cite{ZamirKochmanErezDPCM-IT} for optimality), and feed-forward-equalizer / decision-feedback-equalizer (FFE-DFE) receiver in channel coding (see \cite{CDEF-MMSE-DFE}). \footnote{In the high-rate limit it is easy to see the role of prediction: the rate-distortion function amounts to that of the white source innovations process, while the channel capacity is the additive white Gaussian noise channel capacity with the noise replaced by its innovations process only. We stick to this limit in the introduction; for general rates, see \secref{Sec_Prel}.} 

The optimality of DPCM hinges on prediction being performed using the reconstruction rather than the source itself. \footnote{Extracting the innovations of the un-quantized source is sometimes called ``D$^*$PCM'' and is known to be strictly inferior to DPCM; see \cite{Jayant84,ZamirKochmanErezDPCM-IT}.} Identical predictors, with equal outputs, are employed at the encoder and at the decoder. An alternative approach, advocated for low-complexity encoding, is ``Wyner-Ziv video coding''  (see e.g.
\cite{PuriRamchandranPRISM}). In this approach, prediction is performed at the decoder only and is treated as decoder side-information \cite{WynerZiv76}. In the context of the AM scheme, however, decoder-only prediction is not an option but a must: since no quantization is used, but rather the reconstruction error is generated by the channel, the encoder does not have access to the error and the side-information approach must be taken. 

In the channel counterpart, the FFE-DFE receiver cancels the effect of past channel inputs by filtering past decoder decisions (assumed to be equal to these inputs). In order to avoid error propagation, sometimes precoding \cite{Tomlinson}, where the filter is moved to the encoder, is preferred; this can be seen as a form of \emph{dirty-paper coding} \cite{Costa83} , where the filter output plays the role of encoder side-information. Again, the AM scheme must use the ``encoder side-information'' variant: if no channel code is used, then the decoder cannot make digital decisions regarding past channel inputs, so virtually it has no access to these inputs.    

\begin{figure*}[!t]
{\centering
\input{cualitative.pstex_t}}
\ccaption{Workings of the AM scheme in the high-SNR limit. The source is assumed to have an auto-regressive (AR) model. $\mod \Lambda$ is the modulo-lattice operation.}
\label{cualitative_fig}
\end{figure*}

To summarize, the AM scheme uses source prediction at the decoder, and channel prediction at the encoder, and then treats the predictor outputs as Wyner-Ziv and dirty-paper side-information, respectively; see \figref{cualitative_fig}. Digital solutions to these side-information problems rely on binning, which may also be materialized in a structured (lattice) way \cite{RamiShamaiUriLattices}. AM treats these two side-information problems jointly using \emph{modulo-lattice modulation} (MLM) , an approach proposed recently for joint Wyner-Ziv and dirty-paper coding \cite{JointWZ-WDP}. However, combining these pieces turns out to be a non-trivial task. The interaction of filters with high-dimensional lattice codes raises technical difficulties which are solved in the sequel.   

The rest of the paper is organized as follows: We start in
\secref{Sec_Prel} by bringing
preliminaries regarding sources and channels with memory, as well as
modulo-lattice modulation and side-information problems. In
\secref{Sec_Matching} we prove the optimality of the Analog Matching
scheme. In \secref{Sec_Universal} we analyze the scheme performance
for unknown SNR, and prove its asymptotic robustness. Finally,
\secref{conclusion} discusses applications of AM, and is advantage relative to other approaches (e.g. HDA) in terms of delay.

\section{Formulation and Preliminaries} \label{Sec_Prel}

In \secref{Sub_Problem} we formally present the problem.
In the rest of the section we bring preliminaries necessary for the rest of the
paper. In Sections~\ref{Sub_Pred} to \ref{Sub_DPCM} we present
results connecting the Gaussian-quadratic rate-distortion function (RDF) and the Gaussian
channel capacity to prediction, mostly following
\cite{ZamirKochmanErezDPCM-IT}. In sections~\ref{Sub_lattice} and
\ref{Sub_joint} we discuss lattices and their application to joint
source/channel coding with side information, mostly following
\cite{JointWZ-WDP}.

\subsection{Problem Formulation} \label{Sub_Problem}

\begin{figure}[t]
\centering
\input{colored_model.pstex_t}
\ccaption{Colored Gaussian joint source/channel setting.} \label
{fig_Setting}
\end{figure}

\figref{fig_Setting} demonstrates the setting we consider in this
paper. The source $S_n$ is zero-mean stationary Gaussian, with spectrum $S_S(\ej)$. As for the channel, for the purpose of the analysis to follow we break away with the ISI model discussed in the introduction, and use a colored noise model:  \footnote{The transition between the
two models is straightforward using a suitable front-end filter
at the receiver (provided that the ISI filter is invertible).}
\begin{equation}
\label{channel} Y_n = X_n + Z_n ,
\end{equation}
where $X_n$ and $Y_n$ are the channel input and output, $Z_n$ is zero-mean additive stationary Gaussian
noise with spectrum $S_Z(\ej)$, assumed to be finite for all $2|f| \leq B_C$ and infinite otherwise. 
The channel input $X_n$ needs to satisfy the power constraint $\var\{X_n\}\leq P$, and the distortion of the reconstruction $\hat S_n$ is given by $D=\var\{\hat S_n - S_n\}$.

\subsection{Spectral Decomposition and Prediction} \label{Sub_Pred}

Let $A_n$ be a zero-mean discrete-time stationary process, with power  spectrum $S_A(\ej)$.
The Paley-Wiener condition is given by
\cite{VanTrees68}: \beq{Paley} \left| \inthalf{\log
\Bigl(S_A(\ej)\Bigr)} \right| < \infty \ \ , \eeq where here and in the sequel logarithms are taken to the natural base. It holds
for example if the spectrum $S_A(\ej)$ is bounded away from zero.
Whenever the Paley-Wiener condition holds, the spectrum has a
spectral decomposition: \beq{spectral_decomposition}
S_A(\ej)=\left. Q(z)Q^*\left(\frac{1}{z^*}\right)\right|_{z=j2\pi f}
P_e\Bigl(S_A\Bigr) \ \ , \eeq where $Q(z)$ is a monic causal filter,
and the entropy-power $P_e\left(S_A\right)$ of the spectrum is defined
by: \beq{entropy_power} P_e(S_A) \Ddef P_e\Bigl( S_A(\ej)\Bigr) =
\exp{\int_{-\frac{1}{2}}^{\frac{1}{2}}\log \Bigl( S_A(\ej)\Bigr) df} \
\ . \eeq The \emph{optimal predictor} of the process $A_n$ from its infinite past is \beq{optimal_predictor} P(z) = 1 - Q^{-1}(z) \ \ , \eeq a filter with
an impulse response satisfying $p_n=0$ for all $n\leq 0$. The
prediction mean squared error (MSE) is equal to the entropy power of the process:
\beq{noiseless_prediction}  \var\{A_n|A_{-\infty}^{n-1}\} = P_e(S_A) \
\ . \eeq The prediction error process can
serve as a white innovations process for AR representation of the
process. 

We define the \emph{prediction gain} of a spectrum $S_A(\ej)$ as:
\beq{pred_gain} \Gamma(S_A) \Ddef \Gamma\Bigl(S_A(\ej)\Bigr) \Ddef \frac
{\inthalf{S_A(\ej)} }{P_e\left(S_A\right)} =
\frac{\var\{A_n\}}{\var\{A_n|A_{-\infty}^{n-1}\}} \geq 1 \ \ , \eeq
where the gain equals one if and only if the spectrum is white, i.e.
fixed over all frequencies $|f|\leq\frac{1}{2}$. A case of special
interest, is where the process is band-limited such that $S_A(\ej)=0 \
\forall |f|>\frac{B}{2}$ where $B<1$. In that case, the Paley-Wiener condition \eqref{Paley}
does not hold and the prediction gain is infinite. We re-define,
then, the prediction gain of a process band-limited to $B$ as the
gain of the process downsampled by $\frac{1}{B}$, i.e., \footnote{A similar definition can be made for more general cases, e.g., when the signal is band-limited to some band which does not start at zero frequency.}
\beq{pred_gain_rho} \Gamma(S) = \frac
{\frac{1}{B} \int_{-\frac{B}{2}}^{\frac{B}{2}} S_A(\ej) df}{\exp \left[ \frac{1}{B}
\int_{-\frac{B}{2}}^{\frac{B}{2}} \log \Bigl(S_A(\ej)\Bigr) df\right]} \ \ .
\eeq

We will use in the sequel prediction from a noisy version of a
process: Suppose that $C_n=A_n+W_n$, with $W_n$ additive white with power
$\theta$. Then it can be shown that the noisy prediction error has variance (see e.g. \cite{ZamirKochmanErezDPCM-IT}):
\beq{noisy_prediction} \var\{A_n|C_{-\infty}^{n-1}\} =
P_e(S_A+\theta)-\theta \ \ . \eeq Note that for any $\theta>0$, the
spectrum $S_A(\ej)+\theta$ obeys \eqref{Paley}, so that the
conditional variance is non-zero even if $A_n$ is band-limited. In the
case $\theta=0$, \eqref{noisy_prediction} collapses to
\eqref{noiseless_prediction}.

\subsection{Water-Filling Solutions and the Shannon Bounds}
\label{Sub_SLB}

The RDF for a Gaussian source with
spectrum $S_S(\ej)$ under an MSE distortion measure is given by:
\beq{RDF} R(D) = \frac{1}{2}\inthalf{\log\frac{S_S(\ej)}{D(\ej)}} \
\ ,\eeq where the \emph{distortion spectrum} $D(\ej)$ is given by
the reverse water-filling solution: $D(\ej)=\min\Bigl(
\theta_S,S(\ej)\Bigr)$ with the \emph{water level} $\theta_S$ set by
the distortion level $D$:
\[D = \int_{-1/2}^{1/2} D(\ej) df \ \ \ .\]
The \emph{Shannon lower bound} (SLB) for the RDF of a source
band-limited to $B_S$ is given by: \beq{SLB} R(D) \geq \frac{B_S}{2}
\log\frac{\SDR}{\Gamma_S} \Ddef R_{SLB}(D) \ \ , \eeq where SDR, the
\emph{signal-to-distortion ratio}, is defined as: \beq{SDR} \SDR \Ddef
\frac{\var\{S_n\}}{D} \eeq and $\Gamma_S\Ddef\Gamma(S_S)$ is the
source prediction gain \eqref{pred_gain_rho}. This bound is tight
for a Gaussian source whenever the distortion level $D$ is low
enough such that $D<B_S \min_{|f|\leq B_S} S_S(\ej)$, and consequently
$D(\ej)=\theta_S=\frac{D}{B_S}$ for all $|f|<B_S$. Note that the bound reflects a coding rate gain of $B_S/2 \log(\Gamma_S)$ with respect to the RDF of a white Gaussian source.

The capacity of the colored channel \eqref{channel} where the noise $Z_n$ has spectrum $S_Z(\ej)$,
bandlimited to $B_C$, is given by: \beq{C} C
= \int_{-\frac{B_C}{2}}^{\frac{B_C}{2}} {\log\left(1+\frac{P(\ej)}{S_Z(\ej)}\right)} \ \ , \eeq
where the \emph{optimum channel input spectrum} $P(\ej)$ is given by the
water-filling solution: $P(\ej)=\max\Bigl(
\theta_C-S_Z(\ej),0\Bigr)$ inside the band, with the \emph{water level} $\theta_C$
set by the power constraint $P$:
\[P = \int_{-B_C/2}^{B_C/2} P(\ej) df \ \ \ .\] The \emph{Shannon upper bound} (SUB) for the channel capacity is
given by: \beq{SUB} C \leq \frac{B_C}{2} \log \left[\Gamma_C \cdot
\left(1+\tSNR\right)\right] \Ddef C_{SUB} \ \ , \eeq where SNR, the \emph{signal-to-noise ratio}, is defined as: \beq{tSNR} \SNR
\Ddef \frac {P}{N} \Ddef \frac{P}{\int_{-B_C/2}^{B_C/2}{S_Z(\ej)}df}\ \ , \eeq and $\Gamma_C\Ddef\Gamma(S_Z)$ is the channel prediction gain \eqref{pred_gain}. 
The bound is
tight for a Gaussian channel whenever the SNR is high
enough such that $P \geq B_C \max_{|f|\leq B_C} S_Z(\ej) - N$
and consequently $S_Z(\ej)+P(\ej)=\theta_C=\frac{P+N}{B_C}$. Note that the bound reflects a coding rate gain of $B_C/2 \log(\Gamma_C)$ with respect to the AWGN channel capacity.

We now connect the capacity and RDF expressions. In terms of the SDR and SNR defined above, and denoting the inverse of the RDF by $R^{-1}(\cdot)$, the optimal performance \eqref{OPTA1} becomes: \beq{OPTA} \SDR^{opt}\Ddef \frac{\var\{S_n\}}{R^{-1}(C(\tSNR))} \ \
. \eeq Let the \emph{bandwidth ratio} be \beq{rho} \rho \Ddef
\frac{B_C}{B_S} \ \ . \eeq 
Combining \eqref{SLB} with \eqref{SUB}, we have the following
asymptotically tight upper bound on the Shannon optimum performance. It shows that the prediction gains product $\Gamma_S\Gamma_C$ gives the total SDR gain relative to the case where the source and channel spectra are white.

\vspace{5mm}
\begin{prop} \label{RDeqC} 
\[ \frac{SDR^{opt}}{(1+\tSNR)^\rho} \leq
\Gamma_S \Gamma_C  \ \ , \] with equality if and
only if the SLB and SUB both hold with equality. \footnote{The SLB and
SUB never strictly hold with equality if $S_S(\ej)$ is not bounded away from zero,
or $S_Z(\ej)$ is not everywhere finite. However, they do hold
asymptotically in the high-SNR limit, if these spectra satisfy the Paley-Wiener
condition \eqref{Paley}.} Furthermore, if the source and the channel noise both satisfy the Paley-Wiener condition \eqref{Paley} inside their respective bandwidths, then when the SNR is taken to infinity by increasing the power constraint $P$ while holding the noise spectrum fixed: \beq{D_high_prop}
\lim_{\tSNR\rightarrow\infty} \frac{\SDR^{opt}}{\tSNR^\rho}
 = \Gamma_S \Gamma_C \ \ . \eeq
\end{prop}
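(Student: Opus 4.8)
The plan is to sandwich the common value in Shannon's matching condition \eqref{OPTA1} between the two Shannon bounds. Since $D^{opt}$ is defined by $R(D^{opt})=C$, I would apply the Shannon lower bound \eqref{SLB} to the rate--distortion side and the Shannon upper bound \eqref{SUB} to the capacity side, producing the chain
\[
\frac{B_S}{2}\log\frac{\SDR^{opt}}{\Gamma_S}=R_{SLB}(D^{opt})\;\leq\;R(D^{opt})\;=\;C\;\leq\;C_{SUB}=\frac{B_C}{2}\log\Bigl[\Gamma_C\,(1+\SNR)\Bigr].
\]
Dividing by $B_S/2$, inserting the bandwidth ratio \eqref{rho}, and exponentiating isolates $\SDR^{opt}/(1+\SNR)^{\rho}$ and bounds it by a product of the prediction gains, which is the asserted inequality. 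The only two inequalities in the chain are the SLB and the SUB steps --- the middle identity $R(D^{opt})=C$ is the equality \eqref{OPTA1} --- so the composite bound is tight exactly when both the SLB and the SUB hold with equality, which is the stated equality condition.

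For the limit \eqref{D_high_prop}, I would drive $\SNR\to\infty$ by letting $P\to\infty$ with $S_Z$ fixed and track two mechanisms. First, $C\to\infty$ forces $R(D^{opt})\to\infty$ and hence $D^{opt}\to 0$, so the reverse water level $\theta_S=D^{opt}/B_S\to 0$; second, the channel water level $\theta_C=(P+N)/B_C\to\infty$. Under the Paley--Wiener hypothesis \eqref{Paley} these two limits render the SLB and SUB asymptotically tight, so the chain above holds with asymptotic equality and $\SDR^{opt}$ is asymptotic to its equality value. Dividing that value by $\SNR^{\rho}$ and using $\bigl((1+\SNR)/\SNR\bigr)^{\rho}\to 1$ then yields \eqref{D_high_prop}.

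The main obstacle is precisely this asymptotic-tightness claim: when $S_S(\ej)$ is not bounded away from zero within its band, or $S_Z(\ej)$ is unbounded within its band, the SLB and SUB never hold with exact equality at finite SNR, and one must show the two gaps vanish in the limit rather than merely stay bounded. On the source side the reverse-water-filling spectrum $D(\ej)=\min\bigl(\theta_S,S_S(\ej)\bigr)$ departs from the flat value $\theta_S$ only on the set $\{S_S<\theta_S\}$, which shrinks as $\theta_S\to 0$; the point is to show, using \eqref{Paley} so that the log-integral is finite and invoking dominated convergence, that this set contributes $o(1)$ to $R(D^{opt})-R_{SLB}(D^{opt})$. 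A dual estimate controls $C_{SUB}-C$ as $\theta_C\to\infty$. Establishing that both gaps are $o(1)$, and in particular negligible against the $\SNR^{\rho}$ scaling, is where the Paley--Wiener condition does the real work; once tightness is in hand the remaining algebra is routine.
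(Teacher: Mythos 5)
Your proposal follows the paper's own (essentially one-line) derivation: the paper introduces the proposition with ``Combining \eqref{SLB} with \eqref{SUB}'' and gives no further proof, so sandwiching $R(D^{opt})=C$ between $R_{SLB}(D^{opt})$ and $C_{SUB}$ is exactly the intended argument, and your treatment of the high-SNR limit (showing the two Shannon-bound gaps are $o(1)$ under Paley--Wiener) actually supplies detail the paper relegates to a footnote. One caveat: carrying your chain through literally gives $\SDR^{opt}/(1+\SNR)^{\rho}\leq \Gamma_S\,\Gamma_C^{\rho}$, since the factor $B_C/B_S=\rho$ multiplies $\log\bigl[\Gamma_C(1+\SNR)\bigr]$ as a whole; this coincides with the stated bound $\Gamma_S\Gamma_C$ only when $\rho=1$ or $\Gamma_C=1$, so you should not assert that the algebra yields ``the asserted inequality'' without either exhibiting the normalization of $\Gamma_C$ that absorbs the exponent or noting that the discrepancy is inherited from the statement itself.
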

\vspace{5mm}

\subsection{Predictive Presentation of the Gaussian RDF and Capacity}
\label{Sub_DPCM}

\begin{figure}
\centering
      \subfloat[RDF-achieving configuration using filters and AWGN channel.]{\label {Fig_Pred_RDF}
      \input{pred_rdf.pstex_t}}
      \\
      \subfloat[Capacity-achieving configuration using filters and white Gaussian input.]{\label {Fig_Pred_C}
      \input{pred_c.pstex_t}}
      \ccaption{Forward-channel configurations for the RDF and capacity.} \label{Figs_Pred}
\end{figure}

Not only the SLB and SUB in \eqref{SLB} and \eqref{SUB} can be
written in predictive forms, but also the rate-distortion function
and channel capacity, in the Gaussian case. These predictive forms
are given in terms of the forward-channel configurations depicted in
\figref{Figs_Pred}.

For source coding, let $F_1(\ej)$ be any filter with amplitude
response satisfying \beq{source_pre} |F_1(\ej)|^2 = 1 -
\frac{D(\ej)}{S_S(\ej)} \ \ , \eeq where $D(\ej)$ is the distortion
spectrum materializing the water-filling solution \eqref{RDF}. We
call $F_1(\ej)$ and $F_2(\ej)=F_1^*(\ej)$ the pre- and post-filters
for the source $S$ \cite{FederZamir94}.

As a consequence of
\eqref{noisy_prediction}, the pre/post filtered AWGN depicted in
\figref{Fig_Pred_RDF} satisfies \cite{ZamirKochmanErezDPCM-IT}: \beq{Pred_RDF} R(D) =
\frac{1}{2}\log\left(1+\frac{\var\{U_n|V_{-\infty}^{n-1}\}}{\var\{Z_n\}}
\right) \ \ ,
\eeq where ${\var\{Z_n\}}=\theta_S$.
Note that in the limit of low
distortion the filters vanish, prediction from $U_n$ is equivalent
to prediction from $V_n$, and we go back to \eqref{SLB}. Defining
the source Wiener coefficient \beq{alpha_S} \alpha_S = 1 - \exp
\left( -2R(D)\right) \ \ , \eeq \eqref{Pred_RDF} implies that
\beq{source_exact} \var\{U_n|V_{-\infty}^{n-1}\} =
\frac{\alpha_S}{1-\alpha_S} \theta_S \ \ . \eeq

For channel coding, let $G_1(\ej)$ be any filter with amplitude
response satisfying \beq{channel_pre} |G_1(\ej)|^2 =
\frac{P(\ej)}{\theta_C} \ \ , \eeq where $P(\ej)$ and $\theta_C$ are
the channel input spectrum and water level materializing the
water-filling solution \eqref{C}. $G_1(\ej)$ is usually referred to
as the channel shaping filter, but motivated by the the similarity
with the solution to the source problem we call it a channel
pre-filter. At the channel output we place $G_2(\ej)=G_1^*(\ej)$,
known as a matched filter, which we call a channel post-filter.

In the pre/post
filtered colored-noise channel depicted in \figref{Fig_Pred_C}, let
the input $\tilde X_n$ be white and define the (non-Gauusian, non-additive) noise $\tilde Z_n= \tilde Y_n -
\tilde X_n$. Then the channel satisfies (see \cite{Forneyallerton04}, \cite{ZamirKochmanErezDPCM-IT}):
\beq{Pred_C} C =
\frac{1}{2}\log\left(\frac{\var\{\tilde X_n\}}{\var\{\tilde Z_n|
{\tilde Z}_{-\infty}^{n-1}\}} \right)
\eeq where $\var\{\tilde X_n\} = \theta_C$. Note that in the limit of low noise the
filters vanish, prediction from $\tilde Z_n$ is equivalent to
prediction from $Z_n$, and we go back to \eqref{SUB}. Defining the
channel Wiener coefficient \beq{alpha_C} \alpha_C = 1 - \exp
\left( -2C\right) \ \ , \eeq \eqref{Pred_C} implies that
\beq{channel_exact} \var\{\tilde Z_n|{\tilde Z}_{-\infty}^{n-1}\} =
\frac{1-\alpha_C}{\alpha_C} \theta_C \ \ . \eeq

Combining \eqref{alpha_S} with \eqref{alpha_C}, we note that in a joint source-channel setting where the optimum performance \eqref{OPTA1} is achieved, \beq{alpha} \alpha_S=\alpha_C=\alpha \ \ . \eeq A connection between the water-filling parameters and conditional variances can be derived using \eqref{source_exact} and \eqref{channel_exact}.

The predictive presentations \eqref{Pred_RDF} and \eqref{Pred_C} translate the process mutual information rates $\bar I(S_n;\hat S_n)$ and $\bar I(X_n;Y_n)$ to the conditional mutual informations $I(U_n;V_n|V_{-\infty}^{n-1})$ and $I(\tilde X_n;\tilde X_n + \tilde Z_n|\tilde Z_{-\infty}^{n-1})$, respectively.  This is highly attractive as the
basis for coding schemes, since it allows to use the combination of predictors and 
generic optimal codebooks for \emph{white} sources and channels,
regardless of the actual spectra, without compromising optimality.
See e.g. \cite{GuessVaranasiIT}, \cite{ZamirKochmanErezDPCM-IT}. In the source case, \eqref{Pred_RDF} 
establishes the optimality of a DPCM-like scheme, where the prediction error of
$U_n$ from the past samples of $V_n$ is being quantized and the
quantizer is equivalent to an AWGN. For channel coding, \eqref{Pred_C} implies a noise-prediction receiver, 
which can be shown to be
equivalent to the better known MMSE FFE-DFE solution
\cite{CDEF-MMSE-DFE}.

\subsection{Good Lattices for Quantization and Channel Coding} \label{Sub_lattice}

Let $\Lambda$ be a $K$-dimensional lattice, defined by the generator
matrix $G \in \mathbb{R}^{K \times K}$. The lattice includes all
points $\{\bl=G\cdot\bi:\bi\in\mathbb{Z}^K\}$ where
$\mathbb{Z}=\{0,\pm 1, \pm 2, \ldots\}$. The nearest neighbor
quantizer associated with $\Lambda$ is defined by \[Q(\bx) =
\arg\min_{\bl\in\Lambda} \|\bx-\bl\| \ \ ,\] where ties are broken in a systematic way. 
Let the basic Voronoi
cell of $\Lambda$ be
\[\Nu_0 = \{\bx : Q(\bx)=\textbf{0}\}\ \ .\] The second
moment of a lattice is given by the variance of a
uniform distribution over the basic Voronoi cell, per dimension:
\beq{lattice_power} \sigma^2(\Lambda) = \frac{1}{K} \cdot
\frac{\int_{\Nu_0} \|\bx\|^2 d\bx}{\int_{\Nu_0} d\bx} \ \ . \eeq The
modulo-lattice operation is defined by:
\[\bx \bmod{\Lambda} = \bx - Q(\bx)\ \ .\] The following is the key condition that has 
to be verified in the analysis to follow. \begin{definition}\label{correct_decoding} (\textbf{Correct decoding}) We say that
correct decoding of a vector $\bx$ by a lattice $\Lambda$
occurs, whenever \[\bx \bmod \Lambda = \bx \ \ , \] i.e., $\bx \in \Nu_0$. \end{definition} 
For a dither vector $\bd$ which is independent of $\bx$ and
uniformly distributed over the basic Voronoi cell $\Nu_0$,
$[\bx+\bd] \bmod{\Lambda}$ is uniformly distributed over $\Nu_0$ as
well, and is independent of $\bx$ \cite{FederZamirLQN}.

We assume the use of lattices which are simultaneously good for
source coding (MSE quantization) and for AWGN channel coding
\cite{GoodLattices}. Roughly speaking, a sequence of $K$-dimensional
lattices is \emph{good for MSE quantization} if the second moment of
these lattices tends to that of a uniform distribution over a ball of the same volume, as $K$
grows. A sequence of lattices is \emph{good for AWGN channel coding}
if the probability of correct decoding \eqref{correct_decoding} of a
Gaussian i.i.d. vector with element variance smaller than the
square radius of a ball having the same volume as the lattice basic cell,
approaches zero for large $K$. There exists a sequence of lattices
satisfying both properties simultaneously, thus for these lattices,
correct decoding holds with high probability for Gaussian i.i.d.
vectors with element variance smaller than $\sigma^2(\Lambda)$, for
large enough $K$. This property also holds when the Gaussian vector
is replaced by a linear combination of Gaussian and ``self noise''
(uniformly distributed over the lattice basic cell) components. The following formally states the property used in the sequel.

\begin{definition} \label{def_Combination} (\textbf{Combination noise})  Let $\bZ_1,\ldots,\bZ_L$ be mutually-i.i.d. vectors, independent of $\bZ_0$, uniformly distributed over the basic cell of $\Lambda$, and let $\bZ_0$ be a Gaussian i.i.d. vector with element variance $\sigma^2(\Lambda)$. Then for any real coefficients, $\sum_{l=0}^L \alpha_l \bz_L$ is a combination noise with composition $\alpha_0,\ldots,\alpha_L$. \end{definition}

\vspace{5mm}
\begin{prop} \label{prop_lattice} (\textbf{Existence of good lattices})
Let $\{\Lambda_K\}$ denote a sequence of $K$-dimensional lattices with basic cells $\{\Nu_K\}$ of fixed
second moment $\sigma^2$. Let $\{\bZ_K\}$ be a corresponding sequence of combination noise vectors, with fixed composition satisfying:
\[ \sum_{l=1}^L \alpha_l^2 <1 \ \ . \] Then there exists a sequence $\{\Lambda_K\}$ such that: \[\limsup_{K\rightarrow\infty} \Pr\{ \bZ_K \bmod \Lambda_K \neq \bZ_k \} = 0 \ \ . \] \end{prop}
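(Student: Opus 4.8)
The plan is to show that the combination noise leaves the basic cell $\Nu_K$ with vanishing probability by combining two ingredients: (i) its Euclidean norm concentrates at $\sqrt{K}\,\beta\sigma$, strictly inside the effective cell radius $r_{\mathrm{eff}}\sim\sqrt{K}\sigma$, where $\beta^2\Ddef\sum_{l=0}^{L}\alpha_l^2<1$ is the normalized second moment of $\bZ_K$; and (ii) a Minkowski--Hlawka counting bound, which guarantees correct decoding of \emph{any} noise whose norm is concentrated strictly inside $r_{\mathrm{eff}}$ and which, crucially, depends on the noise only through its norm, so that the non-Gaussian self-noise components cause no difficulty. Throughout I would work with a single lattice sequence that is simultaneously good for quantization and for channel coding, as provided by \cite{GoodLattices}; the former property is needed to control the self-noise, the latter to decode.

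The first step is the concentration of $\tfrac1K\|\bZ_K\|^2$. Since $\bZ_0,\dots,\bZ_L$ are independent and zero-mean with per-dimension second moment $\sigma^2$ each, $\tfrac1K\mathbb{E}\|\bZ_K\|^2=\beta^2\sigma^2$. To upgrade this to convergence in probability I would treat the terms separately: the Gaussian contribution $\alpha_0^2\|\bZ_0\|^2$ is a scaled chi-square and concentrates; each self-noise contribution $\alpha_l^2\|\bZ_l\|^2$ concentrates because quantization goodness forces $\tfrac1K\|\bZ_l\|^2\to\sigma^2$; and the cross terms $\langle\alpha_i\bZ_i,\alpha_j\bZ_j\rangle$ with $i\neq j$ have zero mean and are $o(K)$ by independence. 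Hence $\tfrac1K\|\bZ_K\|^2\to\beta^2\sigma^2$ in probability, so for any $\delta>0$ with $\beta^2(1+\delta)<1$ the event $\{\|\bZ_K\|^2\le r^2\}$, where $r^2\Ddef K(1+\delta)\beta^2\sigma^2$, has probability tending to one, with $r$ bounded away from $r_{\mathrm{eff}}$ from below.

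The second step converts a short norm into correct decoding, and here the non-Gaussianity evaporates. Decoding fails only if some nonzero lattice point lies within distance $\rho=\|\bZ_K\|$ of $\bZ_K$. This is controlled by the same Minkowski--Hlawka estimate that underlies channel-coding goodness: for cells of volume $V=V_K\,r_{\mathrm{eff}}^K$, with $V_K$ the unit-ball volume, the expected number of offending points is of order $V_K\rho^K/V=(\rho/r_{\mathrm{eff}})^K\le(r/r_{\mathrm{eff}})^K$, which decays exponentially because $r<r_{\mathrm{eff}}$. Since this bound depends on $\bZ_K$ only through $\rho$, the concentration of Step~1 immediately yields a vanishing (ensemble-averaged) error probability, from which a single sequence $\{\Lambda_K\}$ --- still of second moment $\sigma^2$ and still quantization-good --- is extracted.

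The hard part is Step~1's self-noise concentration: establishing $\var\{\|\bZ_l\|^2\}=o(K^2)$ for a vector uniform over $\Nu_K$, which is not immediate from the definition of quantization goodness (equality of normalized second moments) and must instead be drawn from the near-sphericity of good cells. A secondary delicate point is that the \emph{same} sequence must serve both roles --- quantization-good, so that the self-noise is norm-concentrated, and coding-good, so that the counting bound applies --- which is exactly why one appeals to the simultaneous-goodness construction rather than to either property in isolation. By contrast, the non-Gaussianity of $\bZ_K$, which might seem to be the obstacle, is disposed of for free, since the counting bound of Step~2 sees the noise only through $\rho=\|\bZ_K\|$.
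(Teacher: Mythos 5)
Your route is genuinely different from the paper's. The paper never touches norms: it invokes the density-domination bound of Erez--Zamir (their eq.~(200)), by which the density of each cell-uniform component is upper-bounded by $e^{K\epsilon(\Lambda_K)}$ times a white Gaussian density of the same per-dimension variance, with $\epsilon(\Lambda_K)\to 0$ for Rogers-good lattices; convolving, the whole combination (after splitting off a small-variance tail that is handled by continuity) is density-dominated by $e^{o(K)}$ times a Gaussian of total variance strictly below $\sigma^2$, and the error probability guaranteed by AWGN-goodness absorbs the $e^{o(K)}$ factor. You instead propose norm concentration plus a Minkowski--Hlawka counting bound --- essentially the ``semi norm-ergodic noise'' argument. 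That route is viable in principle, but as written it has two gaps, both located exactly where you claim the difficulty evaporates.

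First, in Step~1 you misplace the hard part. The diagonal terms are easy: for Rogers-good lattices $\|\bZ_l\|\le r_{\mathrm{cov}}=(1+o(1))\,r_{\mathrm{eff}}$ deterministically, which together with $E\|\bZ_l\|^2=K\sigma^2$ already forces concentration. What is \emph{not} easy is the cross terms, which you dismiss ``by independence.'' Independence gives them zero mean, but $\var\{\langle\bZ_i,\bZ_j\rangle\}=\mathrm{tr}(\Sigma_i\Sigma_j)$, and to make this $o(K^2\sigma^4)$ you need the covariance $\Sigma$ of the cell-uniform distribution to satisfy $\lambda_{\max}(\Sigma)=o(K\sigma^2)$; this is true for Rogers-good cells but needs a separate geometric argument (a near-ball-volume convex subset of a ball cannot carry a constant fraction of its variance in a single direction) and does not follow from equality of normalized second moments. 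Second, and more seriously, the Minkowski--Hlawka bound in Step~2 controls $E_\Lambda\bigl[\#\{\bl\in\Lambda\setminus 0:\ \bl\in B(\bz,\rho)\}\bigr]$ for a noise realization $\bz$ \emph{fixed independently of the lattice}. Here the self-noise components are uniform over the Voronoi cell of the very lattice being averaged over, so the interchange of the ensemble average with the expectation over the noise is not legitimate, and the claim that the bound ``sees the noise only through its norm'' breaks down. The standard repairs replace the lattice-dependent self-noise law by a lattice-independent majorant --- which is precisely the density-domination step you set out to avoid. So the self-noise is not ``disposed of for free''; it is the crux, and your proposal does not close it. (A minor point in your favor: your reading of the hypothesis as $\sum_{l=0}^L\alpha_l^2<1$, including the Gaussian coefficient, is the condition the argument actually needs; the statement's $\sum_{l=1}^{L}$ appears to be a typo.)
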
  

This is similar to \cite[Poposition 1]{JointWZ-WDP}, but with the single ``self-noise'' component replaced by a combination. It therefore requires a proof, included in the appendix. 

\subsection{Coding for the Joint WZ/DPC Problem using Modulo-Lattice
Modulation} \label{Sub_joint}

\begin{figure}[t]
\centering
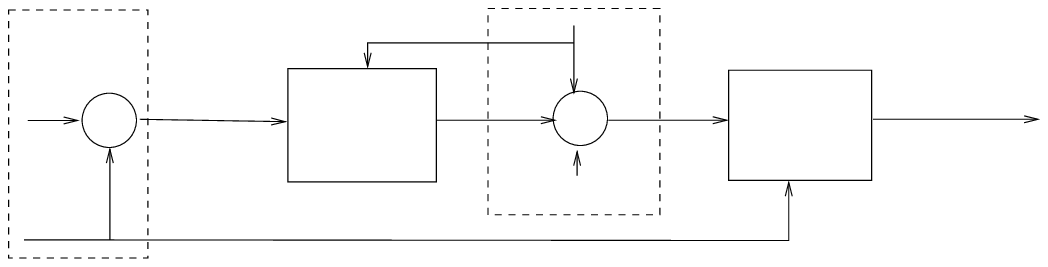
\ccaption{The Wyner-Ziv / dirty-paper coding problem.}
\label{joint_prob_fig}
\end{figure}

\begin{figure}[!t]
\vspace{1cm}
\centering \subfloat[The MLM scheme.]{\label{joint_Scheme_fig}
\input{joint_scheme3.pstex_t}}
\\ \vspace{2mm}
      \subfloat[Asymptotic equivalent real-additive channel for ``good'' lattices.
      The equivalent noise
	$\bZ_{eq} = \alpha \bZ - (1-\alpha) \bX$ is independent of $\bQ$, and asymptotically Gaussian.]
	{\label {output_eq_fig}
      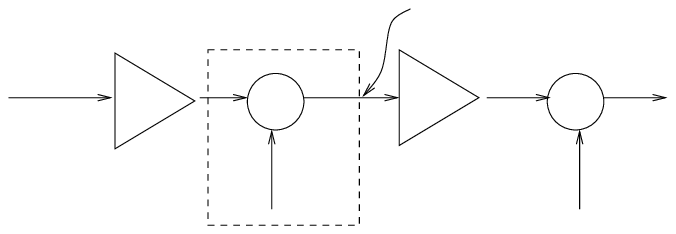}
      \ccaption{MLM Wyner-Ziv / dirty-paper coding.}
\label{joint_equivalent_fig}
\end{figure}

The lattices discussed above can be used for achieving the optimum
performance in the joint source/channel Gaussian Wyner-Ziv/dirty-paper coding, \footnote{An alternative form of this scheme may be
obtained by replacing the lattice with a random code and using
mutual information considerations; see
\cite{WilsonNarayananCaire07}.} depicted in
\figref{joint_prob_fig}. In that problem, the source is the sum of
an unknown i.i.d. Gaussian component $Q_n$ and an arbitrary
component $J_n$ known at the decoder, while the channel noise is the
sum of an unknown i.i.d. Gaussian component $Z_n$ and an arbitrary
component $I_n$ known at the encoder. In \cite{JointWZ-WDP} the MLM
scheme of \figref{joint_Scheme_fig} is shown to achieve the optimal performance \eqref{OPTA1} for
suitable $\alpha$ and $\beta$. This is done showing asymptotic
equivalence with high probability (for good lattices) to the real-additive channel of
\figref{output_eq_fig}. The \emph{output}-power constraint $P$ in that last
channel reflects the element variance condition in order to ensure
correct decoding of the vector $\beta \bQ_n +
{\bZ_{eq}}_n$ with high probability, according to Proposition~\ref{prop_lattice}. When this holds, the dithered
modulo-lattice operation at the encoder and the decoder perfectly
cancel each other. This way, the MLM scheme asymptotically
translates the SI problem to the simple problem of transmitting the
unknown source component $Q_n$ over an AWGN, where the known source component $J_n$ and the channel interference
$I_n$ are not present.


\section{The AM Scheme} \label{Sec_Matching}

\begin{figure*}[t]
\centering
\input{main_scheme4.pstex_t}
\ccaption{The Analog Matching scheme.} \label{matching_fig}
\end{figure*}

In this section we prove the optimality of the Analog
Matching scheme, depicted in \figref{matching_fig}, in the limit of high lattice
dimension. We assume for now that we have $K$ mutually-independent identically-distributed
source-channel pairs in parallel,\footnote{We will discuss in the
sequel how this leads to optimality for a single source and a single
channel.} which allows a $K$-dimensional dithered modulo-lattice
operation across these pairs. Other operations are done
independently in parallel. To simplify notation we omit the
index $k$ of the source/channel pair ($k=1,2,\ldots,K$), and use
scalar notation meaning \emph{any} of the $K$ pairs; we denote
by bold letters $K$-dimensional vectors, for the
modulo-lattice operation. Subscripts denote time instants. Under
this notation, the AM encoder is given by:
\beqn{AM_encoder} \nk{U} &=& {f_1}_n * \nk{S} \nonumber \\
\tilde\bX_n &=& \left[\beta\bU_n - \bI_n + \bD_n \right] \bmod
\Lambda \nonumber \\ \nk{I} &=& - \sum_{m=1}^\infty {p_C}_m \tilde
X_{n-m} \nonumber
\\ \nk{X} &=& {g_1}_n * \nk{\tilde X}   \ \ , \eeqn while the decoder is
given by: \beqn{AM_decoder} \nk{\tilde Y} &=& {g_2}_n * \nk{Y} \nonumber \\
\nk{Y'} &=& \nk{\tilde Y} - \sum_{m=1}^\infty
{p_C}_m \tilde Y_{n-m} \nonumber \\
\bV_n &=& \frac{1}{\beta} \left[\tilde \bY_n - \beta \bJ_n - \bD_n
\right] \bmod \Lambda + \bJ_n \nonumber \\ \nk{J} &=&
\sum_{m=1}^\infty {p_S}_k V_{n-k}  \nonumber
\\ \nk{\hat S} &=& {f_2}_n * \nk{V}   \ \ , \eeqn where $*$ denotes
convolution. For each filter of frequency response $H(\ej)$, the corresponding impulse response is denoted by small letters $h_n$. Each of the $K$
parallel channels is given by the colored noise
model \eqref{channel}. 

The filters used in the scheme are determined by the optimal solutions presented in Sections~\ref{Sub_SLB} and \ref{Sub_DPCM}. The channel capacity, and corresponding water-level $\theta_C$, are given by \eqref{C}. This determines, through the optimality condition \eqref{OPTA1}, the distortion level $D$. Using that $D$, the RDF, and corresponding water-level $\theta_S$, are given by \eqref{RDF}. 
The filters $F_1(\ej)$ and $F_2(\ej)$ are then chosen according to \eqref{source_pre}, and
 $G_1(\ej)$ and $G_2(\ej)$ according to \eqref{channel_pre}. We also use $\alpha$ of \eqref{alpha}. Finally, $P_S(\ej)$ and $P_C(\ej)$ are the optimal predictors \eqref{optimal_predictor} of the spectra \beq{S_V} S_V(\ej) \Ddef |F_1(\ej)|^2 S_S(\ej) + \frac{1-\alpha}{\beta^2} \theta_C \eeq and \beq{S_tildeZ} S_{\tilde Z}(\ej) \Ddef \ \Bigl(1-|G_1(\ej)|^2\Bigr)^2 \theta_C +
|G_1(\ej)|^2 S_Z(\ej) \ \,  \eeq respectively, where we take $|G_1(\ej)|^2 S_Z(\ej)=0$ wherever $|G_1(\ej)|=0$ even if $S_Z(\ej)$ is infinite.

The analysis we apply to the scheme shows that at each time instant it is
equivalent to a joint source/channel side-information (SI) scheme,
and then applies the Modulo-Lattice Modulation (MLM) approach presented in \secref{Sub_joint}. The key to the proof
is showing that with high probability the correct decoding event of Definition~\ref{correct_decoding}
holds, thus the modulo-lattice operation at the decoder exactly
cancels the corresponding operation at the encoder. As the distribution of the signal fed to the decoder modulo-lattice operation depends upon the past decisions through the filters memory, the analysis has a \emph{recursive} nature: we show, that if the scheme is in a ``correct state'' at time instant $n$, it will stay in that state at instant $n+1$ with high probability, resulting in optimal distortion. Formally, for the decoder modulo-lattice input:
\beq{T_n} \bT_n 
 = \beta (\bU_n - \bJ_n) + {\bZ_{eq}}_n \ \ , \eeq we define the desired state as follows. 

\begin{definition} \label{def_init} We say that the Analog Matching
scheme is \emph{correctly initialized} at time instance $n$, if all
signals at all times $n-1,n-2,\ldots$ take values according to the
assumption that correct decoding (see Definition~\ref{correct_decoding}) held w.r.t. $\bT_{n-1},\bT_{n-2},\ldots$.\end{definition}

Using this, we make the following optimality claim.

\vspace{5mm}
\begin{theorem} \label{thm_AM2} (\textbf{Asymptotic optimality of the Analog Matching
scheme}) Let $D(N,K)$ be the achievable expected distortion of the AM scheme operating on input blocks of time duration $N$ with lattice dimension $K$.
Then there exists a sequence $N(K)$ such that
\[\lim_{K\rightarrow\infty} D(N(K),K) = D^{opt} \ \ , \]
where $D^{opt}$ was defined
in \eqref{OPTA1}, provided that 
at the start of transmission the scheme is correctly initialized.
\vspace{3mm}
\end{theorem}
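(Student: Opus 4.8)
The plan is to condition on the event $\mathcal{G}_N$ that correct decoding in the sense of Definition~\ref{correct_decoding} occurs with respect to every decoder modulo-lattice input $\bT_1,\ldots,\bT_N$, and to establish three facts: (i) on $\mathcal{G}_N$ the AM scheme collapses, at every time instant, to the real-additive channel of \figref{output_eq_fig}, so that its per-symbol distortion equals $D^{opt}$; (ii) $\Pr\{\mathcal{G}_N^c\}$ can be driven to zero by letting $K\rightarrow\infty$ and choosing $N=N(K)$ to grow slowly enough; and (iii) the distortion incurred on $\mathcal{G}_N^c$, weighted by its probability, is negligible. Together these give $\lim_{K\rightarrow\infty}D(N(K),K)=D^{opt}$.

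For (i), suppose the scheme is correctly initialized at time $n$ (Definition~\ref{def_init}) and that correct decoding holds at time $n$. Then the dithered modulo-lattice operations at the encoder and decoder cancel exactly as in \secref{Sub_joint}: the channel predictor output $I_n$ injected at the encoder is reproduced at the decoder through the identity $\tilde Y_n=\tilde X_n+\tilde Z_n$ together with the matched predictor $P_C(z)$, while the source predictor output $J_n$ is formed at the decoder by construction. What survives is the additive relation \eqref{T_n}, so that $\bV_n=\bU_n+\tfrac{1}{\beta}\bZ_{eq,n}$ with $\bZ_{eq}=\alpha\bZ-(1-\alpha)\bX$. I would then read off optimality from the predictive presentations: using \eqref{Pred_RDF}, \eqref{Pred_C}, the equality $\alpha_S=\alpha_C=\alpha$ of \eqref{alpha}, and the conditional variances \eqref{source_exact}, \eqref{channel_exact}, one checks that prediction of $U_n$ from the past of $V_n$ leaves exactly the designed innovation variance, that the effective additive noise $\tfrac{1}{\beta}\bZ_{eq,n}$ carries the matching spectral shape, and that after the post-filter $F_2(\ej)$ the reconstruction error $\hat S_n-S_n$ realizes the water-filling distortion spectrum, whence $D=D^{opt}$.

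The heart of the argument, and the step I expect to be the main obstacle, is (ii): controlling the probability of an incorrect modulo decoding over all $N$ instants in the presence of the feedback loops through $P_S(z)$ and $P_C(z)$. The difficulty has two faces. Structurally, since the filters carry memory, the decoding variable $\bT_n$ is corrupted not by a single self-noise term but by a \emph{combination} of the genuinely Gaussian part (the source innovation $\beta(\bU_n-\bJ_n)$ together with $\alpha\bZ_n$) and of the past dither-induced self-noise components that the filter $g_1$ and the predictors spread into the present; this is exactly why the combination-noise Proposition~\ref{prop_lattice}, rather than its single-component predecessor, is required, and one must verify that the induced composition obeys $\sum_{l\geq1}\alpha_l^2<1$. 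Logically, the statistics of $\bT_n$ are the designed ones only when decoding succeeded at every earlier instant. I would therefore argue by induction on $n$: correct decoding at times $1,\ldots,n-1$ makes the scheme correctly initialized at time $n$, so that $\bT_n$ is a combination noise of the designed composition and per-dimension second moment $\sigma^2(\Lambda)$, and Proposition~\ref{prop_lattice} yields a single-step error probability $p_K$ with $\limsup_{K\rightarrow\infty}p_K=0$. A union bound over the $N$ instants then gives $\Pr\{\mathcal{G}_N^c\}\leq N\,p_K$.

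It then remains to handle (iii) and conclude. Since $p_K\rightarrow0$, I can pick $N(K)\rightarrow\infty$ with $N(K)\,p_K\rightarrow0$, so that $\Pr\{\mathcal{G}_{N(K)}^c\}\rightarrow0$. On the error event the distortion stays bounded by a finite constant $D_{\max}$: every modulo-lattice output lies in the bounded cell $\Nu_0$, and the source synthesis filter $1/\bigl(1-P_S(z)\bigr)$ is stable, so all internal signals have bounded power irrespective of past decoding errors; hence the contribution of $\mathcal{G}_{N(K)}^c$ to the expected distortion is at most $\Pr\{\mathcal{G}_{N(K)}^c\}\,D_{\max}\rightarrow0$. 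The power constraint is met because the dithered output $\tilde\bX_n$ is uniform over $\Nu_0$ and independent of the data, so that $X_n=g_1\ast\tilde X_n$, driven by a white input of variance $\sigma^2(\Lambda)=\theta_C$, has power $\theta_C\inthalf{|G_1(\ej)|^2}=P$. Combining (i)--(iii), and letting the lattice-goodness parameters of Proposition~\ref{prop_lattice} attain their limits as $K\rightarrow\infty$, gives $\lim_{K\rightarrow\infty}D(N(K),K)=D^{opt}$, which proves the claim.
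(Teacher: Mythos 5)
Your proposal follows essentially the same route as the paper: per-instant reduction to the MLM side-information scheme (the paper's Lemma~\ref{lemma_AM1}), a recursive correct-decoding argument via the combination-noise Proposition~\ref{prop_lattice} (the paper's Lemma~\ref{lemma_AM}), a union bound over the $N$ instants, and the choice of $N(K)\rightarrow\infty$ with $N(K)p_e(K)\rightarrow 0$.

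One step is glossed over, and it is worth making explicit. You assert in (i) that, conditioned on correct decoding throughout, the per-symbol distortion \emph{equals} $D^{opt}$, and in (ii) that one "must verify" $\sum_{l\geq 1}\alpha_l^2<1$. These two claims are in tension: with the exactly-optimal zooming factor $\beta=\beta_0$ of \eqref{beta_0}, the variance of $\bT_n$ equals the lattice second moment $\theta_C$ with \emph{no strict margin}, so the hypothesis of Proposition~\ref{prop_lattice} fails and the single-step error probability cannot be driven to zero. The paper resolves this by backing $\beta$ off from $\beta_0$ (its Lemma~\ref{lemma_AM} is stated only for $\beta$ strictly on one side of $\beta_0$, which is what produces the strict inequality $\var\{T_n\}<\theta_C$ in \eqref{AM_Condition1}); the price is that the equivalent additive noise variance exceeds $\theta_S$, so the conditional distortion is only $D^{opt}+\epsilon$, and a final diagonalization $\epsilon(K)\rightarrow 0$ is needed. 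A similar (more minor) omission is the finite-block edge effect: the equivalent forward channel of \figref{Fig_Pred_RDF} is stationary while transmission is of finite duration, which the paper handles by zeroing $U_n$ and $V_n$ outside the block and absorbing the resulting excess distortion into the $N(K)\rightarrow\infty$ limit. Neither issue changes your architecture, but as written your step (i) claims more than the scheme delivers for any fixed $K$.
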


In \secref{sub_high} we gain some insight into the workings of the scheme by considering it in the uqual-BW high-SNR regime, while in \secref{sub_BW} we cosider the important special cases of bandwidth expansion and compression. \secref{sub_optimality} contains the proof of \thref{thm_AM2}, and then in \secref{sub_remarks} we discuss how one can implement a scheme based on the theorem.

\subsection{The Scheme in the Equal-BW High-SNR Regime} \label{sub_high}

In the equal-BW case ($B_S=B_C$) and in the limit of high resolution ($D^{opt} \ll P_e(S_S)$), the scheme can be simplified significantly. In this limit, the filters approach \emph{zero-forcing} ones: both source and channel pre- and post-filters collapse to unit all-pass filters, while the source and channel predictors become just the optimal predictors of the spectra $S_S(\ej)$ and $S_Z(\ej)$, respectively. The receiver filter $1-P_C(z)$ is then a whitening filter for the noise, and the channel from $X_n$ to $Y'_n$ is equivalent to a white-noise inter-symbol interference (ISI) channel: \beq{ISI} Y'_n = X_n - \sum_{m=1}^\infty {p_C}_m X_{n-m} + W_n \ \ , \eeq where $W_n$ is AWGN of variance $P_e(S_Z)$. Under these conditions, the source can always be modeled as an auto-regressive (AR) process: \beq{AR} S_n =  \sum_{m=1}^\infty {p_S}_m S_{n-m} + Q_n \ \ , \eeq where $Q_n$ is the white innovations process, of power $P_e(S_S)$. 

\begin{figure*}[!t]
\centering
\input{simple_scheme6.pstex_t}
\ccaption{The scheme in the high-SNR limit.}
\label{fig_high_resolution}
\end{figure*}

The resulting scheme is depicted in \figref{fig_high_resolution}. It is evident, that the channel predictor cancels all of the ISI, while the source predictor removes the source memory, so that effectively the scheme transmits the source innovation through an AWGN channel. The gains of source and channel prediction are $\Gamma_S = P_e(S_S) / \var\{S_n\}$ and $\Gamma_C = P_e(S_Z) / \var\{Z_n\}$, respectively (recall \eqref{pred_gain}). In light of \eqref{D_high_prop}, the product of the two is indeed the required gain over memoryless transmission. In fact, if we assume that the modulo-lattice operations have no effect, then the entire scheme is equivalent to the AWGN channel: \[\hat S_n = S_n + \frac{W_n}{\beta} \ \ . \] Letting $\beta^2 = P / \var\{Q_n\}$, we than have that: \[ \SDR = \frac{\beta^2 \var\{S_n\}}{\var\{W_n\}} = \frac{\beta^2\Gamma_C\SNR\var\{S_n\}}{P} = \Gamma_S\Gamma_C\SNR \] which is optimal at the limit, recall \eqref{D_high_prop}. In order to satisfy the power constraint, the lattice second moment must be $P$, thus the gain $\beta$ amplifies the source innovations to a power equal to the lattice second moment; as we will prove in the sequel, this choice of $\beta$ indeed guarantees correct decoding, on account of Proposition~\ref{prop_lattice}.

\subsection{The BW Mismatch Case}\label{sub_BW}

At this point, we present the special cases of \emph{bandwidth
expansion} and \emph{bandwidth compression}, and see how the analog
matching scheme specializes to these cases. In these cases the
source and the channel are both white, but with different bandwidth
(BW). The source and channel prediction gains are both one, and the
optimum condition \eqref{OPTA} becomes: \beq{OPTA_white} \SDR^{opt}
= \left(1+\tSNR\right)^\rho \ \ , \eeq where the bandwidth ratio
$\rho$ was defined in \eqref{rho}.

For bandwidth expansion ($\rho>1$), we choose to work with a sampling rate
corresponding with the channel bandwidth, thus in our discrete-time
model the channel is white, but the source is band-limited to a
frequency of $\frac{1}{2\rho}$. As a result, the channel predictor
$P_C(z)$ vanishes and the channel post-filters become the
scalar Wiener factor $\alpha$. The source water-filling solution
allocates all the distortion to the in-band frequencies, thus we
have $\theta_S = \rho D$ and the source pre- and post-filters become
both ideal low-pass filters of width $\frac{1}{2\rho}$ and height
\beq{filter_height} \sqrt{1-\frac{1}{\SDR^{opt}}}=\sqrt{1-\frac{1}{\left(
1+\tSNR\right) ^{\rho}}} \ \ . \eeq As the source is band-limited,
the source predictor is non-trivial and depends on the distortion
level. The resulting prediction error of $U_n$ has variance
\[ \var\{U_n|V_{-\infty}^{n-1}\} = \frac{\rho \var\{S_n\}}
{\left( 1+\tSNR\right) ^{\rho-1}} \ \ , \] and the resulting
distortion achieves the optimum \eqref{OPTA_white}.

For bandwidth compression ($\rho<1$), the sampling rate reflects the source
bandwidth, thus the source is white but the channel is band-limited
to a frequency of $B_C=\frac{\rho}{2}$. In this case the source
predictor becomes redundant, and the pre- and post-filters become a
constant factor equal to \eqref{filter_height}. The channel pre- and
post-filters are ideal low-pass filter of width $\frac{\rho}{2}$ and
unit height. The channel predictor is the SNR-dependent DFE. Again
this results in achieving the optimum distortion \eqref{OPTA_white}.
It is interesting to note, that in this case the outband part of the
channel error $\nk{\tilde Z}$ is entirely ISI (a filtered version of
the channel inputs), while the inband part is composed of both
channel noise and ISI, and tends to be all channel noise at high
SNR.

\subsection{Proof of \thref{thm_AM2}} \label{sub_optimality}

We start the optimality proof by showing that the Analog Matching scheme is equivalent at
each time instant to a WZ/DPC scheme, as in \secref{Sub_joint}. Specifically, the equivalent scheme is shown in
\figref{side_info_eq_fig}, which bears close resemblance to
\figref{joint_Scheme_fig}. The equivalence is immediate using the definitions of $I_n$ \eqref{AM_encoder} and $J_n$ \eqref{AM_decoder}, since they are constructed in
the encoder and the decoder using \emph{past} values of $\tilde X_n$
and $V_n$, respectively, thus at any fixed time instant they can be seen
as side information. It remains to show that indeed the unknown noise component is white, and
evaluate its variance.

\vspace{5mm}
\begin{lemma} \label{lemma_AM1} \textbf{(Equivalent side-information scheme)}
Assume that $\var{\tilde X_n} = \theta_C$, then
\[\nk{Z'} \Ddef \frac{Y'_n-I_n}{\alpha} - \tilde X_n \] is a white process, independent of all $U_n$,
with variance
\[\var \{Z'_n\} = \frac{1-\alpha}{\alpha} \theta_C \ \ .
\] \vspace{3mm}
\end{lemma}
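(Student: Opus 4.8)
The plan is to reduce $Y'_n - I_n$ to an explicit function of the transmitted sample $\tilde X_n$ and the channel noise $Z_n$, and then to recognize it as the output of the equivalent memoryless channel furnished by the predictive presentation of capacity in Section~\ref{Sub_DPCM}. First I would substitute the channel law \eqref{channel} and the filter definitions into the decoder equations \eqref{AM_decoder}: writing $\tilde Y_n = g_2 * (X_n + Z_n)$ with $X_n = g_1 * \tilde X_n$ and using $G_2 = G_1^*$, the post-filtered output is $\tilde Y_n = \tilde X_n + \tilde Z_n$, where $\tilde Z_n$ has exactly the spectrum $S_{\tilde Z}$ of \eqref{S_tildeZ}. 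Since $1-P_C(z)$ is, by \eqref{optimal_predictor}, the whitening filter of $S_{\tilde Z}$, applying it and subtracting the encoder-known term $I_n$ (which is built from past $\tilde X$ only) expresses $Y'_n - I_n$ purely in terms of $\tilde X_n$ and $Z_n$; in particular it depends on the source solely through $\tilde X_n$.

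The heart of the argument is to identify $Y'_n - I_n$ with the biased MMSE estimate of $\tilde X_n$ produced by this noise-predictive (FFE--DFE) receiver, i.e.\ to establish
\[ Y'_n - I_n = \alpha\,\tilde X_n + \nu_n , \]
where $\{\nu_n\}$ is white, uncorrelated with $\tilde X_n$, and has $\var\{\nu_n\} = \alpha(1-\alpha)\theta_C$. This is the channel-coding counterpart of the predictive identity of Section~\ref{Sub_DPCM}: conditioned on the infinite past, the pre/post-filtered channel with noise prediction reduces to the memoryless channel $\tilde X_n \mapsto \alpha\tilde X_n + \nu_n$, whose innovation variance is $\var\{\tilde Z_n \mid \tilde Z_{-\infty}^{n-1}\}$. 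I would read off the gain $\alpha$ from \eqref{alpha_C} (equivalently the Wiener relation $\alpha = 1-\mathrm{MMSE}/\theta_C$), obtain whiteness of $\nu_n$ from the orthogonality principle of MMSE prediction (the innovation is uncorrelated with the entire past, hence has flat spectrum), and get $\var\{\nu_n\}=\alpha(1-\alpha)\theta_C$ from the standard MMSE fact $\var\{\hat{\tilde X}_n\}=\alpha\theta_C$ together with \eqref{channel_exact}.

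Dividing by $\alpha$ then yields $Z'_n = \nu_n/\alpha$, so $Z'_n$ is white with $\var\{Z'_n\} = \var\{\nu_n\}/\alpha^2 = (1-\alpha)\theta_C/\alpha$, as claimed. For independence from all $U_m$ I would invoke the dither (crypto) property: since $\tilde\bX_n = [\beta\bU_n - \bI_n + \bD_n]\bmod\Lambda$ with a fresh dither $\bD_n$, the output $\tilde X_n$ is uniform over the basic cell and independent of its argument, so the sequence $\{\tilde X_n\}$ is independent of $\{U_m\}$; as $\{Z_n\}$ is independent of the source as well, and $Z'_n$ is a function of $\{\tilde X_m\}$ and $\{Z_m\}$ only, the independence follows.

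I expect the main obstacle to be exactly this identification step. By \eqref{S_tildeZ} the effective noise $\tilde Z_n = \tilde Y_n - \tilde X_n$ is correlated with the signal $\tilde X_n$ (it is non-additive), so the naive ``signal plus noise-innovation'' form of the receiver output is, after bias removal, neither white nor of the stated variance. The work is therefore to show that the receiver nonetheless realizes the canonical biased-MMSE decomposition $\alpha\tilde X_n + \nu_n$ with $\nu_n \perp \tilde X_n$ white --- that is, that the coefficient $\alpha$ of \eqref{alpha_C} is precisely what decorrelates the residual from $\tilde X_n$ and flattens its spectrum --- which is where the spectral-factorization machinery of Section~\ref{Sub_DPCM} and the water-filling choice of the filters must be used in full.
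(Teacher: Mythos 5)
Your proposal follows essentially the same route as the paper's own proof: reduce the matched-filtered, noise-predicted output to the biased-MMSE form $Y'_n - I_n = \alpha\tilde X_n + \nu_n$ using the predictive presentation of capacity and the orthogonality principle, deduce whiteness and the variance from \eqref{channel_exact}, and obtain independence of all $U_n$ from the dither (crypto) property plus the observation that $Z'_n$ is a function of $\{\tilde X_m\}$ and $\{Z_m\}$ alone. The only difference is cosmetic: the paper first writes the backward additive channel $\tilde X_n = (Y'_n - I_n) + Z''_n$ with $Z''_n$ the innovation of $\tilde Z_n$ and then ``switches'' to the forward form, whereas you go directly to the forward decomposition with $\var\{\nu_n\}=\alpha(1-\alpha)\theta_C$ --- which is in fact the cleaner bookkeeping for landing on $\var\{Z'_n\}=\frac{1-\alpha}{\alpha}\,\theta_C$.
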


\begin{figure}[t]
    \centering
    \subfloat[Equivalent WZ/DPC scheme.]{\label
    {side_info_eq_fig}
      \input{AM_as_joint.pstex_t}}
      \\
    \subfloat[Equivalent modulo-lattice channel.]{\label
    {finite_k_matching_fig}
      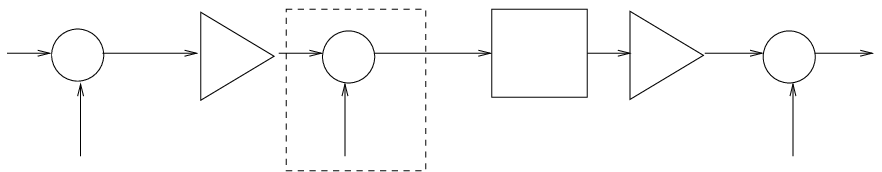}
      \\
      \subfloat[Asymptotic equivalent scalar additive channel for good lattices.]{\label {output_matching_fig}
      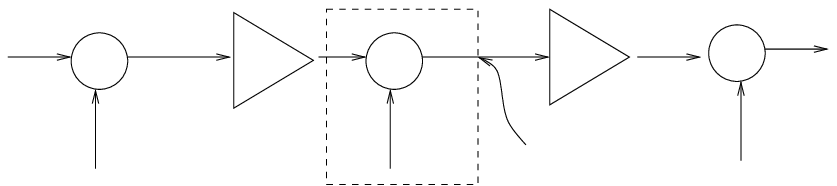}
      \ccaption{Equivalent channels for the Analog Matching scheme.}
\label{matching_equivalent_fig}
\end{figure}

The proof of this lemma appears in the appendix. Now we note, that if the modulo-lattice operations in the equivalent scheme of \figref{side_info_eq_fig} can be dropped, this will result in a scalar additive white-noise channel (see \figref{matching_equivalent_fig}):  \beq{ideal_additive} V_n = U_n +
\frac{{Z_{eq}}_n}{\beta} \ \ , \eeq where \beq{Z_eq} {Z_{eq}}_n = \alpha Z'_n - (1-\alpha) \tilde X_n \eeq is a white additive noise term of variance \beq{var_Z_eq} \var\{{Z_{eq}}_n\} = \alpha \var \{Z'_n\} = (1-\alpha) \theta_C \ \ . \eeq Together with the source pre/post filters $F_1(\ej)$ and $F_2(\ej)$, we can have the forward test channel of \figref {Fig_Pred_RDF}. Furthermore, if $\beta^2$ equals \beq{beta_0} \beta_0^2 \Ddef (1-\alpha)
\frac{\theta_C}{\theta_S} \eeq then the additive noise variance is $\theta_S$, resulting in optimal performance. The relevant condition is that correct decoding (recall Definition~\ref{correct_decoding}) holds for $\bT_n$ \eqref{T_n}. Using the concept of correct initialization (Definition~\ref{def_init}), we first give a recursive claim.

\begin{lemma} \label{lemma_AM}  (\textbf{Steady-state behavior of the Analog Matching
scheme}) Assume that the Analog Matching scheme is applied,
using a lattice $\Lambda=\Lambda_K$ of dimension $K$ which
 is taken from a sequence of lattices of second moment $\theta_C$ which are
simultaneously good for source and channel coding in the sense of
Proposition~\ref{prop_lattice}. Then the probability that correct decoding
does not hold in the present instance can be bounded by $p_e(K)$, where  \[\lim_{K\rightarrow\infty} p_e(K) = 0 \ \ , \] \vspace{3mm}
given that the scheme is correctly initialized and that $\beta>\beta_0$ \eqref{beta_0}.  \vspace{3mm} \end{lemma}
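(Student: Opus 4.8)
The plan is to prove the single recursive step that the lemma isolates: assuming correct initialization (Definition~\ref{def_init}), show that the present decoder modulo-lattice input $\bT_n$ of \eqref{T_n} lands in $\Nu_0$ (i.e.\ correct decoding in the sense of Definition~\ref{correct_decoding}) with probability at least $1-p_e(K)$, by exhibiting $\bT_n$ as a \emph{combination noise} (Definition~\ref{def_Combination}) whose composition satisfies the hypothesis of Proposition~\ref{prop_lattice}. First I would fix the conditioning coming from correct initialization: since all past modulo operations were transparent, the past channel inputs $\{\tilde\bX_m\}_{m<n}$ are dithered encoder outputs, hence mutually i.i.d., each uniform over $\Nu_0$ of second moment $\theta_C$, and independent of the Gaussian source and channel-noise innovations. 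This is exactly the self-noise ingredient required by Definition~\ref{def_Combination}.

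Using Lemma~\ref{lemma_AM1}, equation \eqref{Z_eq}, and the correctly-decoded relation $\bV_n=\bU_n+{Z_{eq}}_n/\beta$ of \eqref{ideal_additive}, I would rewrite the mod input in the compact form $\bT_n=\beta(\bV_n-\bJ_n)=\beta\,Q_V^{-1}(z)\bV_n$, where $Q_V$ is the spectral factor of $S_V$ in \eqref{S_V} and $P_S=1-Q_V^{-1}$. This form is convenient because it presents $\bT_n$ as one stationary filter applied to the reconstruction process, and it lets me split $\bT_n$ cleanly into a Gaussian constituent (driven by the source innovation and by the channel-noise innovation inside $Z'_n$) and a self-noise constituent (the image of the uniform $\{\tilde\bX_m\}$). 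Carrying out the $z$-domain bookkeeping, the self-noise transfer function factors as $Q_V^{-1}(z)\,Q_{\tilde Z}^{-1}(z)\bigl(|G_1|^2-1\bigr)$, with $Q_{\tilde Z}$ the spectral factor of $S_{\tilde Z}$ in \eqref{S_tildeZ}; crucially, the explicit $-(1-\alpha)\tilde\bX_n$ of \eqref{Z_eq} is partially cancelled by the self-noise that is implicit in $Z'_n$, so that a white channel ($|G_1|^2\equiv1$) leaves no self-noise at all. The surviving filtered-$\tilde\bX$ coefficients are the $\{\alpha_l\}_{l\ge1}$, while the jointly Gaussian remainder supplies $\alpha_0\bZ_0$ with $\bZ_0$ of variance $\theta_C=\sigma^2(\Lambda)$.

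Finally I would verify the composition condition. Invoking the factorization identity $|Q|^2=S/P_e(S)$ from \eqref{spectral_decomposition}, the self-noise weight $\sum_{l\ge1}\alpha_l^2$ reduces to a single frequency integral in $S_V$, $S_{\tilde Z}$ and $|G_1|^2$, whose only dependence on $\beta$ enters through the additive term $(1-\alpha)\theta_C/\beta^2$ in $S_V$. Together with the requirement that the Gaussian component be within the lattice's decodable variance $\theta_C$, the effective noise power is $\beta^2 P_e(S_V)$, and using $P_e(S_V)=\theta_S/(1-\alpha)$ from \eqref{source_exact} this equals $\theta_C$ exactly at the threshold \eqref{beta_0}. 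Thus the hypotheses of Proposition~\ref{prop_lattice} are met in the asserted regime, and the proposition yields a lattice sequence with $\Pr\{\bT_n\bmod\Lambda\neq\bT_n\}\le p_e(K)\to0$, which is the claimed recursive step.

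The main obstacle is the self-noise bookkeeping of the second step: the past self-noise re-enters the present instant \emph{both} through the decoder predictor $\bJ_n$ \emph{and} through the channel quantities defining $Z'_n$, so one must track an infinite, recursively-generated family of filtered uniform vectors and show that, after the cancellation between the explicit $-(1-\alpha)\tilde\bX_n$ and the self-noise carried by $Z'_n$, their aggregate squared weight collapses to the clean spectral integral above and meets the threshold \eqref{beta_0}. A secondary subtlety, inherited from the tension flagged in the introduction between scalar filters and high-dimensional lattices, is that the time-domain filters mix the $K$ lattice coordinates; I would need to confirm that the resulting combination still has the per-coordinate i.i.d.\ uniform structure demanded by Definition~\ref{def_Combination}, so that Proposition~\ref{prop_lattice} applies verbatim rather than merely in spirit.
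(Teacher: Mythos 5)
Your overall strategy coincides with the paper's: under correct initialization, exhibit $\bT_n$ as a combination noise in the sense of Definition~\ref{def_Combination}, bound its per-element variance by the lattice second moment $\theta_C$, and invoke Proposition~\ref{prop_lattice}. Where you diverge is in how the variance bound is obtained: you propose to track the self-noise explicitly through a $z$-domain factorization, whereas the paper sidesteps this entirely by writing $\var\{T_n\}=\beta^2\var\{U_n-J_n\}+\var\{{Z_{eq}}_n\}$ (using independence of the innovation $U_n-J_n$ and the white ${Z_{eq}}_n$) and evaluating $\var\{U_n-J_n\}=\var\{U_n\mid V_{-\infty}^{n-1}\}$ in one step via the noisy-prediction identity \eqref{noisy_prediction}. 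Your heavier route is not wrong in principle, but one of its intermediate claims is: the self-noise does \emph{not} vanish for a white channel. Even with $|G_1(\ej)|\equiv 1$, the equivalent noise \eqref{Z_eq} retains the term $-(1-\alpha)\tilde\bX_n$, a uniform self-noise component of weight $(1-\alpha)^2>0$ that nothing inside $Z'_n$ cancels (for a white channel $Z'_n$ is pure channel noise); this is precisely the single self-noise component of the memoryless MLM scheme that Proposition~\ref{prop_lattice} was built to generalize, so your factorization $Q_V^{-1}Q_{\tilde Z}^{-1}\bigl(|G_1|^2-1\bigr)$ cannot be the whole story. (Your secondary worry about filters mixing lattice coordinates is unfounded: the filters act separately on each of the $K$ parallel pairs, i.e., coordinate-wise on the $K$-vectors, and the dither keeps the $\tilde\bX_m$ mutually independent and uniform over $\Nu_0$, which is all that Definition~\ref{def_Combination} requires.)

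The more serious gap is your last step. You correctly obtain $\var\{T_n\}=\beta^2 P_e(S_V)$ and observe that this equals $\theta_C$ exactly at $\beta=\beta_0$, but you then assert that ``the hypotheses are met in the asserted regime'' without determining on which side of the threshold the variance falls strictly \emph{below} $\theta_C$ --- and that determination is the entire content of the paper's chain of inequalities. Writing $\var\{T_n\}=P_e\bigl(\beta^2 S_U+(1-\alpha)\theta_C\bigr)$ and using monotonicity of $P_e$ in the spectrum, the expression is increasing in $\beta$, so $\var\{T_n\}<\theta_C$ holds for $\beta<\beta_0$ and fails for $\beta>\beta_0$. This agrees with the closing sentence of the paper's own proof (``strictly positive for all $\beta<\beta_0$'') and with the fact that the residual noise $\var\{{Z_{eq}}_n\}/\beta^2$ must then exceed $\theta_S$ as used in the proof of Theorem~\ref{thm_AM2}; the hypothesis as printed in the lemma statement appears to have the inequality reversed. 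Had you carried out the monotonicity argument you would have caught this; as written, taking the stated regime $\beta>\beta_0$ at face value gives $\var\{T_n\}>\theta_C$ and the appeal to Proposition~\ref{prop_lattice} collapses.
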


Now we translate the conditional result above (again proven in the appendix), to the optimality
claim for blocks. 

\textbf{Proof of \thref{thm_AM2}:} We choose $N(K)$ to be some sequence such that \[\lim_{K\rightarrow\infty}N(K)=\infty \ \ , \] but at the same time  \[\lim_{K\rightarrow\infty}N(K)p_e(K)=0 \ \ , \] where $p_e(K)$ was defined in Lemma~\ref{lemma_AM}. Let $D^{correct}$ and $D^{incorrect}$ be the expected distortion given that the scheme remains correctly initialized at the end of transmission or does not, respectively. By the union bound we have that: \[D(N(K),K) \leq D^{correct}(N(K),K) + N(K)p_e(K) D^{incorrect}(N(K),K) \ \ . \] Since we assumed that $N(K)p_e(K)$ vanishes in the limit of infinite $K$, so does the second term; see \cite[Appendix II-B]{JointWZ-WDP}. We thus have that \[ \lim_{K\rightarrow\infty} D(N(K),K) = \lim_{K\rightarrow\infty} D^{correct}(N(K),K) \] and we can assume that \eqref{ideal_additive} holds throughout the block. This results in the forward channel of \figref{Fig_Pred_RDF}, up to two issues. First, the channel is stationary while transmission has finite duration, and second the additive noise variance is larger than $\theta_S$ since $\beta>\beta_0$. The first may be solved by forcing $U_n$ and $V_n$ to be zero outside the transmission block, resulting in an excess distortion term; however this finite term vanishes when averaging over large $N(K)$. The second implies that $D^{opt}+\epsilon$ may still be achieved for any $\epsilon>0$, and the result follows by a standard arguments, replacing $\epsilon$ by a sequence $\epsilon(K)\rightarrow 0$

\subsection{From the Idealized Scheme to Implementation} \label{sub_remarks}

We now discuss how the scheme can be implemented with finite
filters, how the correct initialization assumption may be dropped,
and how the scheme may be used for a single source/channel pair.

1. \textbf{Filter length.} If we constrain the filters to have
finite length, we may not be able to implement the optimum filters.
However, it is possible to show, that the effect on both the correct
decoding condition and the final distortion can be made as small
as desired, since the additional signal errors due to the filters
truncation can be all made to have arbitrarily small variance by taking
long enough filters. In the sequel we assume the filters all have
length $L$.

2. \textbf{Initialization.} After taking finite-length filters, we
note that correct initialization now only involves a \emph{finite}
history of the scheme. Consequently, we can create this state by
adding a finite number of channel uses. Now we may create a valid
state for the channel predictor $P_C(\ej)$ by transmitting $L$
values $\tilde X_n = 0$; see \cite{GuessVaranasiIT}. For the source
predictor the situation is more involved, since in absence of past
values of $\nk{V}$, the decoder cannot reduce the source power to
the innovations power, and correct decoding may not hold. This can
be solved by de-activating the predictor for the first $L$ values of
$U_n$, and transmitting them with lower $\beta$ such that
\eqref{AM_Condition1} holds without subtracting $J_n$. Now in order
to achieve the desired estimation error for these first values of
$V_n$, one simply repeats the same values of $U_n$ a number of times
according to the (finite) ratio of $\beta$'s. If the block length
$N$ is long enough relative to $L$, the number of excess channel
uses becomes insignificant.

3. \textbf{Single source/channel pair.} A pair of
interleaver/de-interleaver can serve to emulate $K$ parallel
sources, as done in \cite{GuessVaranasiIT} for an FFE-DFE receiver,
and extended to lattice operations in \cite{RamiShamaiUriLattices}. 
Interestingly, while a separation-based scheme which employs time-domain processing 
for both source and channel parts requires two separate interleavers, one suffices for the AM scheme. Together with the initialization process, we have the following algorithm.

\emph{encoder:}

1. Write $U_n$ row-wise into an interleaving table.

2. Before each row, add source initialization samples. 

3. Build a table for $\tilde X_n$, starting by zero columns for channel initialization. Then add more columns using column-wise modulo-lattice operations on the table of $U_n$, using row-wise past values of $\tilde X_n$ as inputs to the channel predictor.

4. Feed the $\tilde X_n$ table to the channel pre-filter row-wise. 

\emph{decoder:}

1. Write $Y'_n$ row-wise. 

2. Discard the first columns, corresponding to channel initialization.

3. Build a table for $V_n$, starting by using the source initialization data. Then add more columns using column-wise modulo-lattice operations on the table of $Y'_n$, using row-wise past values of $V_n$ as inputs to the source predictor.

4. Feed the $V_n$ table to the source post-filter row-wise.

\section{Unknown SNR} \label{Sec_Universal}

So far we have assumed in our analysis that both the encoder and
decoder know the source and channel statistics. In many practical
communications scenarios, however, the encoder does not know the
channel, or equivalently, it needs to send the same message to
different users having different channels.  Sometimes it is assumed
that the channel filter $H_0(\ej)$ is given, but the noise level $N$
is only known to satisfy $N\leq N_0$ for some given $N_0$. For this
special case, and specifically the broadcast bandwidth expansion and
compression problems, see
\cite{ShamaiVerduZamir,MittalPhamdo,TzvikaBroadcast,Puri2005}.

Throughout this section, we demonstrate that the key factor in
asymptotic behavior for high SNR is the bandwidth ratio $\rho$
\eqref{rho}. We start in \secref{Sub_Universal_Theorem} by proving a
basic lemma regarding achievable performance when the encoder is not
optimal for the actual channel. In the rest of the section we
utilize this result: in \secref{Sub_Optimal} we show asymptotic
optimality for unknown SNR in the case $\rho=1$, then in
\secref{Sub_BW_Change} we show achievable performance for the
special cases of (white) BW expansion and compression, and finally
in \secref{Sub_High} we discuss general spectra in the high-SNR
limit.

\subsection{Basic Lemma for Unknown SNR} \label{Sub_Universal_Theorem}

We prove a result which is valid for the transmission of a colored
source over a degraded colored Gaussian broadcast channel: We assume
that the channel is given by \eqref{channel}, where $B_C$ is known
but the noise spectrum $S_Z(\ej)$ is unknown, except that it is
bounded from above by some spectrum $S_{Z0}(\ej)$ everywhere. We
then use an Analog Matching encoder optimal for $S_{Z0}(\ej)$, as in
\thref{thm_AM2}, but optimize the decoder for the actual noise
spectrum. Correct decoding under $S_{Z0}(\ej)$ ensures correct
decoding under $S_Z(\ej)$, thus the problem reduces to a
\emph{linear} estimation problem, as will be evident in the proof.

For this worst channel $S_{Z0}(\ej)$ and for optimal distortion
\eqref{OPTA}, we find the water-filling solutions
\eqref{RDF},\eqref{C}, resulting in the source and channel water
levels $\theta_S$ and $\theta_C$ respectively, and in a
\emph{source-channel passband} $\cF_0$, which is the intersection of
the inband frequencies of the source and channel water-filling
solutions: \beqn{sourcechannelpassband} \cF_S & = & \{f:
S_S(\ej)\geq \theta_S \} \ \ ,  \nonumber \\ \cF_C & = & \{f:
S_{Z_0}(\ej)\leq \theta_C \} \ \ , \nonumber \\ \cF_0 & = & \cF_S
\cap \cF_C \ \ . \eeqn Under this notation we have the following lemma, proven in the appendix. It shows that the resulting distortion spectrum is that of a linear scheme which transmits the source into a channel with noise spectrum $P / \Phi(\ej)$, where \beq{in_thm} \Phi(\ej) =
\frac{S_{Z_0}(\ej)}{S_Z(\ej)} \left[1-\frac{ S_{Z_0}(\ej) -
S_Z(\ej)}{\theta_C}\right]\frac{S_S(\ej)-\theta_S}{\theta_S} \ \  \nonumber \eeq
depends on both the design noise spectrum and the actual one.

\vspace{5mm} \begin{lemma} \label{lemma_universal} For any noise
spectrum $S_{Z0}(\ej)$, there exists an encoder, such that for any
equivalent noise spectrum \beq{degraded} S_Z(\ej)\leq S_{Z0}(\ej) \
\forall \ f\in\cF_C \ \ , \eeq a suitable decoder can arbitrarily
approach:
\[D = \int_{-\frac{1}{2}}^{\frac{1}{2}} D(\ej) df \ \
, \vspace{-2mm} \] where the distortion spectrum $D(\ej)$ satisfies:
\beq{universal_dist_rep} {D(\ej)} = \left\{
\begin{array}{ll}
\frac{ S_S(\ej)}{1 + \Phi(\ej)} , & \mbox{if $f\in\cF_0$} \\
\min\Bigl( S_S(\ej),\theta_S\Bigr) , & \mbox{otherwise}
\end{array}
\right\} \ \ . \eeq
\vspace{3mm}
\end{lemma}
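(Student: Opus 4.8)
The plan is to keep the Analog Matching encoder fixed at the one that is optimal for the worst-case noise $S_{Z0}(\ej)$, exactly as produced by \thref{thm_AM2} (same lattice of second moment $\theta_C$, same $\beta=\beta_0$ \eqref{beta_0}, same source filters $F_1,F_2$ and channel filters $G_1,G_2$ and predictor $P_C$), and to change \emph{only} the decoder. The first step is the reduction announced before the statement: because the encoder is frozen, the self-noise presented to the lattice still has power equal to the lattice second moment, while the Gaussian contribution coming from the true channel is, by \eqref{degraded}, pointwise no larger on $\cF_C$ than it was under $S_{Z0}$. Hence the combination-noise composition of Definition~\ref{def_Combination} can only contract, Proposition~\ref{prop_lattice} applies verbatim, and the steady-state argument of Lemma~\ref{lemma_AM} gives correct decoding with probability tending to one. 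The encoder and decoder modulo operations therefore cancel throughout the block and, just as in the proof of \thref{thm_AM2}, the scheme degenerates to the additive channel $V_n=U_n+{Z_{eq}}_n/\beta$ with ${Z_{eq}}$ independent of $U_n$; what remains is the \emph{linear} problem of estimating $S_n$ (equivalently $U_n=f_1*S_n$) from this channel, which the decoder may tune freely to the actual $S_Z$.

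The heart of the argument is computing the spectrum of the equivalent noise once the decoder is re-optimized. I would write ${Z_{eq}}=(1-P_C)\bigl[\alpha\tilde Z-(1-\alpha)\tilde X\bigr]$, where $\tilde X$ is the white modulo self-noise of power $\theta_C$ and $\tilde Z$ is the pre/post-filtered channel noise; by \eqref{S_tildeZ} evaluated at the \emph{true} spectrum, $S_{\tilde Z}(\ej)=(1-|G_1|^2)^2\theta_C+|G_1|^2 S_Z(\ej)$, while on $\cF_C$ the design water-filling forces $|G_1|^2=1-S_{Z0}/\theta_C$ and hence $\mathrm{cov}(\tilde X,\tilde Z)=-S_{Z0}$. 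Re-optimizing the MMSE coefficient frequency-by-frequency is then an elementary scalar minimization of $\var\{\alpha\tilde Z-(1-\alpha)\tilde X\}$, which I expect to collapse to $S_Z\theta_C/(\theta_C-S_{Z0}+S_Z)$ on $\cF_C$. Folding in the design predictor gain $|1-P_C|^2=P_e(S_{\tilde Z}^{\,0})/S_{\tilde Z}^{\,0}=(1-\alpha)\theta_C/S_{Z0}$ and the scaling $\beta_0^2=(1-\alpha)\theta_C/\theta_S$, the equivalent noise seen by $U_n$ should become $S_N(\ej)=\theta_S S_Z/\bigl(S_{Z0}[1-(S_{Z0}-S_Z)/\theta_C]\bigr)$.

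Plugging $S_N$ into the forward-test-channel Wiener expression $D(\ej)=S_S S_N/(|F_1|^2 S_S+S_N)$ and using $|F_1|^2 S_S=S_S-\theta_S$ on $\cF_S$ \eqref{source_pre}, the in-band distortion reduces to $S_S/(1+\Phi)$ with exactly the $\Phi$ of \eqref{in_thm}; the sanity check $S_Z=S_{Z0}$ returns $\Phi=(S_S-\theta_S)/\theta_S$ and $D=\theta_S$, recovering the optimality of \thref{thm_AM2}. I expect this chain of spectral identities to be the main obstacle: one must carry the true noise through the mismatched matched filter and noise predictor, legitimately re-optimize the inflation coefficient without disturbing the dirty-paper cancellation (the delicate point being the interaction of a re-tuned front-end filter with the lattice), and verify that the particular algebraic form of $\Phi$ emerges.

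Finally I would treat the frequencies outside $\cF_0=\cF_S\cap\cF_C$ separately. On $\cF_S\setminus\cF_C$ the design channel allocates no power ($|G_1|=0$), so those source components are never transmitted and are estimated by their mean, costing $\theta_S$ (as $S_S\ge\theta_S$ there), while off $\cF_S$ the source lies below the water level and its whole variance $S_S<\theta_S$ is lost; together these yield the $\min\bigl(S_S,\theta_S\bigr)$ branch of \eqref{universal_dist_rep}, and \eqref{degraded} ensures correct decoding is untouched on these bands. Integrating $D(\ej)$ over $f$ and invoking the same lattice-dimension/block-length asymptotics used to pass from Lemma~\ref{lemma_AM} to \thref{thm_AM2} then gives the stated distortion to within any $\epsilon>0$, completing the proof.
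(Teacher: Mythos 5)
Your proposal follows the paper's own route essentially step for step: freeze the $S_{Z0}$-optimal encoder, observe that degradedness preserves correct decoding so everything reduces to linear estimation, re-optimize the channel-side MMSE front end and the source post-filter as Wiener filters for the true noise, and your equivalent-noise spectrum $S_N$ and in-band distortion $S_S(\ej)/(1+\Phi(\ej))$ agree exactly with the paper's $S_{eq}(\ej)/\beta_0^2$ and \eqref{universal_dist_rep} (your bookkeeping of the factor $(\theta_C-S_{Z0})/\theta_C$ between the per-frequency MMSE and the predictor gain differs from the paper's but the product is the same). The one blemish is the justification on $\cF_S\setminus\cF_C$: those source components \emph{are} transmitted (the modulo-lattice output is white across the whole band), and the distortion $\theta_S$ there arises because the equivalent additive noise spectrum equals $\theta_S$ outside $\cF_C$ and Wiener estimation of $S_n$ from signal part $S_S(\ej)-\theta_S$ plus that noise gives exactly $\theta_S$ --- ``estimating by the mean'' would cost $S_S(\ej)\geq\theta_S$, not $\theta_S$.
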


\textbf{Remarks:}

1. Outside the source-channel passband $\cF_0$, there is no gain
when the noise spectrum density is lower than expected. Inside
$\cF_0$, the distortion spectrum is strictly monotonously decreasing
in $S_Z(\ej)$, but the dependence is never stronger than inversely
proportional. It follows, that the overall SDR is at most linear
with the SNR. This is to be expected, since all the gain comes from
linear estimation.

2. In the unmatched case modulation may change performance. That is,
swapping source frequency bands before the analog matching encoder
will change $\cF_0$ and $\Phi(\ej)$, resulting in different
performance as $S_Z(\ej)$ varies. It can be shown that the best
robustness is achieved when $S_S(\ej)$ is monotonously decreasing in
$S_Z(\ej)$.

3. The degraded channel condition \eqref{degraded} is not necessary.
A tighter condition for correct decoding to hold can be stated in
terms of $S_S(\ej)$, $S_{Z0}(\ej)$ and $S_Z(\ej)$: the integral over $S_{eq}(\ej)$, defined in the appendix \eqref{S_eq_robust}, must be at most as it is for the spectrum $S_0(\ej)$.

\subsection{Asymptotic Optimality for equal BW} \label{Sub_Optimal}

We prove asymptotic optimality in the sense that, if in an ISI channel (recall \eqref{ISI}), the ISI filter is known but the SNR is
only known to be above some $\SNR_0$, then a single encoder can
simultaneously approach optimality for any such SNR, in the limit
of high $SNR_0$. 

\vspace{5mm} \begin{theorem} \label{thm_equal_BW}
\textbf({High-SNR robustness}) Let the source and channel have BW $B_S=B_C=1$, and let the equivalent ISI model of the channel
\eqref{ISI} have fixed filter coefficients (but unknown innovations power $\var\{W_n\}$). Then, there exists
an \emph{SNR-independent} sequence of encoders indexed by their lattice dimension $K$, each achieving $\SDR_K(\SNR)$, such that for any $\delta>0$:
 \[ \lim_{K\rightarrow\infty} \SDR_K (\SNR) \geq (1-\delta)\SDR^{opt}(\SNR)
\]
for sufficiently large (but finite) SNR, i.e., for all
$\SNR\geq\SNR_0(\delta)$. \vspace{3mm} \end{theorem}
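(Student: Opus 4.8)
The plan is to build a single Analog Matching encoder that is optimal for the \emph{worst} admissible channel and then let the decoder adapt, invoking Lemma~\ref{lemma_universal} and exploiting the fact that for $\rho=1$ the optimal SDR grows only linearly in the SNR. First I would note that fixing the ISI filter $1-P_C(z)$ while leaving $\var\{W_n\}$ unknown is precisely fixing the \emph{shape} of the noise spectrum: the channel predictor, and hence the whitening filter, depend only on the spectral factor of $S_Z(\ej)$ and not on its entropy power. Thus every admissible noise spectrum has the form $S_Z(\ej)=\lambda\,S_{Z0}(\ej)$ with $\lambda=\SNR_0/\SNR\in(0,1]$, where $S_{Z0}(\ej)$ is the spectrum at the design point $\SNR_0$. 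In particular $S_Z(\ej)\le S_{Z0}(\ej)$ for all $f$, so the degraded-channel hypothesis \eqref{degraded} is met for every $\SNR\ge\SNR_0$, and correct decoding under $S_{Z0}(\ej)$ (guaranteed by \thref{thm_AM2}) automatically yields correct decoding under the actual $S_Z(\ej)$.

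Given $\delta>0$, I would fix a design point $\SNR_0=\SNR_0(\delta)$ (to be taken large) and construct the AM encoder optimal for $S_{Z0}(\ej)$; this is the $\SNR$-independent encoder, since it is built from the design threshold alone and never consults the realized noise level. Applying Lemma~\ref{lemma_universal}, and using that for $B_S=B_C=1$ with $\SNR_0$ large the water-filling solutions fill the whole band so that $\cF_0$ is the entire frequency axis, the achievable distortion spectrum \eqref{universal_dist_rep} becomes $D(\ej)=S_S(\ej)/(1+\Phi(\ej))$ with
\[ \Phi(\ej)=\frac{1}{\lambda}\left[1-\frac{(1-\lambda)S_{Z0}(\ej)}{\theta_C}\right]\frac{S_S(\ej)-\theta_S}{\theta_S}\ \ , \]
obtained by substituting $S_Z=\lambda S_{Z0}$ into \eqref{in_thm}.

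Next I would evaluate this as $\SNR_0\to\infty$. Here $\theta_C\to P$ stays bounded away from zero while $S_{Z0}(\ej)\to0$, so the bracketed correction tends to $1$ uniformly in $\lambda\in(0,1]$; and $\theta_S=D_0\to0$, so $\Phi(\ej)\approx\lambda^{-1}S_S(\ej)/\theta_S\to\infty$ uniformly, giving $D(\ej)\approx\lambda\theta_S$ and $D=\int D(\ej)\,df\approx\lambda\theta_S=\lambda D_0$. With $D_0=\var\{S_n\}/\SDR^{opt}(\SNR_0)$ and $\lambda=\SNR_0/\SNR$ this yields, once $K\to\infty$ secures correct decoding, $\lim_{K\to\infty}\SDR_K(\SNR)\approx\SDR^{opt}(\SNR_0)\cdot\SNR/\SNR_0$. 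Finally I would close the argument with Proposition~\ref{RDeqC} at $\rho=1$, which gives $\SDR^{opt}(\SNR)=\Gamma_S\Gamma_C\,\SNR\,(1+o(1))$ and the same at $\SNR_0$, so that
\[ \frac{\lim_{K\to\infty}\SDR_K(\SNR)}{\SDR^{opt}(\SNR)}=\frac{\SDR^{opt}(\SNR_0)\,\SNR/\SNR_0}{\SDR^{opt}(\SNR)}\longrightarrow1 \]
as $\SNR_0\to\infty$, uniformly over $\SNR\ge\SNR_0$; choosing $\SNR_0(\delta)$ large enough then forces the ratio above $1-\delta$.

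The conceptual heart of the proof is that for $\rho=1$ both the optimum and the performance attainable with decoder-only (linear) adaptation scale linearly in the SNR, so a fixed encoder can follow the optimum --- exactly the behavior flagged in Remark~1 after Lemma~\ref{lemma_universal}. The main obstacle I anticipate is making the two high-SNR estimates \emph{uniform} in the free variable $\SNR\ge\SNR_0$ rather than merely pointwise, and justifying the order of limits: $K\to\infty$ must be taken first so that the excess-distortion contribution from the incorrect-decoding event (bounded as in the proof of \thref{thm_AM2}) is controlled simultaneously for the entire family of admissible noise levels, and only then may $\SNR_0\to\infty$.
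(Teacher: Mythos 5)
Your proposal follows essentially the same route as the paper's proof: design the encoder for the worst-case spectrum $S_{Z0}(\ej)$, observe that a fixed ISI filter with unknown innovations power means $S_Z(\ej)=(\SNR_0/\SNR)S_{Z0}(\ej)$ so the degraded condition \eqref{degraded} holds, invoke Lemma~\ref{lemma_universal}, take $\SNR_0$ large enough that $\cF_0$ covers the whole band so that $D(\ej)\approx(\SNR_0/\SNR)\theta_S$, and close with Proposition~\ref{RDeqC} at $\rho=1$ to identify the linear-in-$\SNR$ constant with $\Gamma_S\Gamma_C$. Your added care about uniformity in $\SNR\ge\SNR_0$ and the order of the $K$ and $\SNR_0$ limits is a welcome sharpening of what the paper leaves implicit.

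The one step you skip is the case where the hypotheses behind ``$\cF_0$ is the entire band for large $\SNR_0$'' fail: if $S_S(\ej)$ is not bounded away from zero or $S_{Z0}(\ej)$ is not bounded above, then no finite $\SNR_0$ makes the water-filling solutions fill the whole band, and outside $\cF_0$ the distortion spectrum in \eqref{universal_dist_rep} is stuck at $\min(S_S(\ej),\theta_S)$ and does not improve with $\SNR$. The paper handles this by artificially setting the pre-filters to $1$ outside their respective bands (with a small gain correction to meet the power constraint), which costs an arbitrarily small SDR loss at $\SNR_0$ while preserving $\SDR\propto\SNR$. Without some such patch your argument only proves the theorem under the additional regularity assumption that the spectra are bounded away from zero and from infinity.
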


\begin{proof} The limit of a sequence of encoders is required, since any fixed finite-dimensional encoder has a gap from $\SDR^{opt}$ that would limit performance as $\SNR\rightarrow\infty$. At the limit, however, we may assume an ideal scheme.
In terms of the colored noise channel \eqref{channel}, the unknown noise variance in the theorem conditions is equivalent to having noise spectrum \[S_Z(\ej) = \frac{\SNR_0}{\SNR}S_{Z0}(\ej) \] where $\SNR \geq \SNR_0 = \SNR_0(\delta)$.  
We apply Lemma~\ref{lemma_universal}, with an encoder designed for $S_{Z0}(\ej)$.
If the source spectrum is bounded away from zero and the
$S_{Z0}(\ej)$ is bounded from above, we can always take $\tSNR_0$
high enough such that the source-channel passband $\cF_0$ includes
all frequencies, and then we have for all $\tSNR\geq\tSNR_0$:
\[D(\ej) \leq \frac{1}{1-\delta} \cdot
\frac{\tSNR_0}{\tSNR} \theta_S \] resulting in \[\SDR \geq
(1-\delta) \frac{\tSNR}{\tSNR_0} \SDR_0 =
(1-\delta) \frac{\tSNR}{\tSNR_0} \Gamma_S \Gamma_C
(1+\tSNR_0) = (1-\delta)\SDR^{opt} \] where the equalities are due to
Proposition~\ref{RDeqC}. If the spectra are not bounded, then we artificially
set the pre-filters to be $1$ outside their respective bands (and apply an additional gain in order to comply with the power constraint). This inflicts an arbitrarily small 
SDR loss at $\tSNR_0$, but retains $\SDR\propto\tSNR$, thus the gap from optimality can be kept arbitrarily small. 
\end{proof}

Alternatively, we could prove this result using a the zero-forcing
scheme of \figref{fig_high_resolution}. In fact,
using such a scheme, an even stronger result can be proven: not only
can the encoder be SNR-independent, but so can the decoder.

\subsection{BW Expansion and Compression}
\label{Sub_BW_Change}

We go back now to the cases of bandwidth expansion and compression
discussed at the end of \secref{Sec_Matching}. In these cases, we
can no longer have a single Analog Matching encoder which is
universal for different SNRs, even in the high SNR limit. For
bandwidth expansion ($\rho>1$), the reason is that the source is
perfectly predictable, thus at the limit of high SNR we have that
\[ \var\{U_n|V_{n-1},V_{n-2},\ldots\} \rightarrow \var\{U_n|U_{n-1},U_{n-2},\ldots\}
= 0 \ \ , \] thus the optimum $\beta$ goes to infinity. Any $\beta$
value chosen to ensure correct decoding at some finite SNR, will
impose unbounded loss as the SNR further grows. For bandwidth
compression, the reason is that using any channel predictor suitable
for some finite SNR, we have in the equivalent noise $\tilde Z_n=\tilde Y_n - \tilde X_n$
some component which depends on the channel input (set by the dither). As the SNR
further grows, this component does not decrease, inflicting again
unbounded loss.

By straightforward substitution in Lemma~\ref{lemma_universal}, we
arrive at the following.

 \vspace{5mm} \begin{cor} \label{cor_universal_expansion}
Assume white source and AWGN channel where we are allowed $\rho$
channel uses per source sample. Then using an optimum Analog
Matching encoder for signal to noise ratio $\SNR_0$ and a
suitable (SNR-dependent) decoder, it is possible to approach for any $\tSNR \geq \tSNR_0$:
\beq{D_universal_expansion} \frac{1}{\SDR} =
\frac{1-\min(1,\rho)}{\Bigl(1+\tSNR_0\Bigr)^\rho} +
\frac{\min(1,\rho)}{1+\Phi_\rho(\tSNR,\tSNR_0)} \ \ , \eeq where
\vspace{-3mm} \beq{Phi_rho} \Phi_\rho(\tSNR,\tSNR_0) \Ddef
\frac{1+\tSNR}{1+\tSNR_0} \Bigl[\left(1+\tSNR_0\right)^\rho -
1\Bigr] \  \ . \eeq 
\vspace{3mm}
\end{cor}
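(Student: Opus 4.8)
The plan is to obtain Corollary~\ref{cor_universal_expansion} by specializing Lemma~\ref{lemma_universal} to the white source / AWGN channel setting, under the two sampling conventions introduced in \secref{sub_BW}: for expansion ($\rho>1$) we sample at the channel rate, so the noise $S_{Z0}(\ej)$ is white over the full band $|f|\leq\frac12$ while the source is band-limited to $|f|\leq\frac{1}{2\rho}$; for compression ($\rho<1$) we sample at the source rate, so the source is white over $|f|\leq\frac12$ while the noise band is $|f|\leq\frac{\rho}{2}$. In either regime the actual channel is degraded, with $S_Z(\ej)=\frac{\SNR_0}{\SNR}S_{Z0}(\ej)$ exactly as in the proof of \thref{thm_equal_BW}, so \eqref{degraded} holds automatically for $\SNR\geq\SNR_0$ and Lemma~\ref{lemma_universal} applies with the encoder designed for $S_{Z0}$.

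First I would identify the source-channel passband $\cF_0$ and its complement together with their measures. For $\rho\geq 1$ the channel occupies the full band, so the source band-limitation forces $\cF_0$ to coincide with the source passband $|f|\leq\frac{1}{2\rho}$, of measure $1/\rho$; outside it $S_S(\ej)=0$ and contributes no distortion. For $\rho<1$ the source is white and full-band, so the channel band-limitation forces $\cF_0$ to coincide with $|f|\leq\frac\rho2$, of measure $\rho$, and on the complement, of measure $1-\rho$, the second branch of \eqref{universal_dist_rep} gives the flat value $\min(S_S(\ej),\theta_S)=\theta_S$. This already produces the two-term convex combination of \eqref{D_universal_expansion}, with the ``no-gain'' branch carrying weight $1-\min(1,\rho)$.

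Next I would evaluate the water levels for white spectra. Matching $R(D)=C$ at the design point gives $\SDR_0=(1+\SNR_0)^\rho$, which is \eqref{OPTA_white} at $\SNR_0$; hence $S_S/\theta_S=(1+\SNR_0)^\rho$ so that $\frac{S_S(\ej)-\theta_S}{\theta_S}=(1+\SNR_0)^\rho-1$ in both regimes, while the channel water-filling yields $\theta_C=S_{Z0}(\ej)(1+\SNR_0)$. Substituting these, together with $S_{Z0}(\ej)/S_Z(\ej)=\SNR/\SNR_0$ and $S_{Z0}(\ej)-S_Z(\ej)=S_{Z0}(\ej)\frac{\SNR-\SNR_0}{\SNR}$, into \eqref{in_thm}, the middle bracket collapses to $\frac{\SNR_0(1+\SNR)}{\SNR(1+\SNR_0)}$ and the whole expression becomes the constant $\Phi(\ej)=\frac{1+\SNR}{1+\SNR_0}\bigl[(1+\SNR_0)^\rho-1\bigr]=\Phi_\rho(\SNR,\SNR_0)$ of \eqref{Phi_rho}, uniform over $\cF_0$.

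Finally I would integrate \eqref{universal_dist_rep} and normalize by $\var\{S_n\}$. For expansion, $\var\{S_n\}$ is the integral of $S_S(\ej)$ over $\cF_0$, so the measure factor $1/\rho$ cancels and $1/\SDR=1/(1+\Phi_\rho)$, i.e. the $\min(1,\rho)=1$ case. For compression, $\var\{S_n\}=\int S_S(\ej)\,df$ over the full band, and the two regions contribute $\frac{\rho}{1+\Phi_\rho}+(1-\rho)\frac{\theta_S}{S_S}$ with $\theta_S/S_S=1/(1+\SNR_0)^\rho$, which is exactly \eqref{D_universal_expansion} at $\min(1,\rho)=\rho$. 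The only real difficulty is the bookkeeping across the two sampling conventions, so that $\theta_S$, $\theta_C$ and the band measures all reduce to the uniform forms above; once $\Phi(\ej)$ is shown constant on $\cF_0$ the remaining integration is immediate, which is why the result is ``straightforward substitution'' rather than a fresh argument.
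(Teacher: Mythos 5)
Your proposal is correct and is precisely the computation the paper intends: the paper derives the corollary ``by straightforward substitution in Lemma~\ref{lemma_universal},'' and you carry out exactly that substitution, correctly identifying $\cF_0$ and its measure in each sampling convention, obtaining $\theta_C=S_{Z0}(1+\SNR_0)$ and $S_S/\theta_S=(1+\SNR_0)^\rho$, and showing $\Phi(\ej)$ collapses to the constant $\Phi_\rho(\SNR,\SNR_0)$ before integrating \eqref{universal_dist_rep}. No gaps; this matches the paper's (implicit) argument.
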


Note that the choice of filters in the SNR-dependent decoder remains
simple in this case: For $\rho>1$ the channel post-filter is flat
while the source post-filter is an ideal low-pass filter, while for
$\rho<1$ it is vice versa. The only parameters which change with
SNR, are the scalar filter gains.

\begin{figure}[t]    \label{expansion_fig}
{\centering \subfloat[$\rho=2$] {    \leavevmode
    \epsfig{file=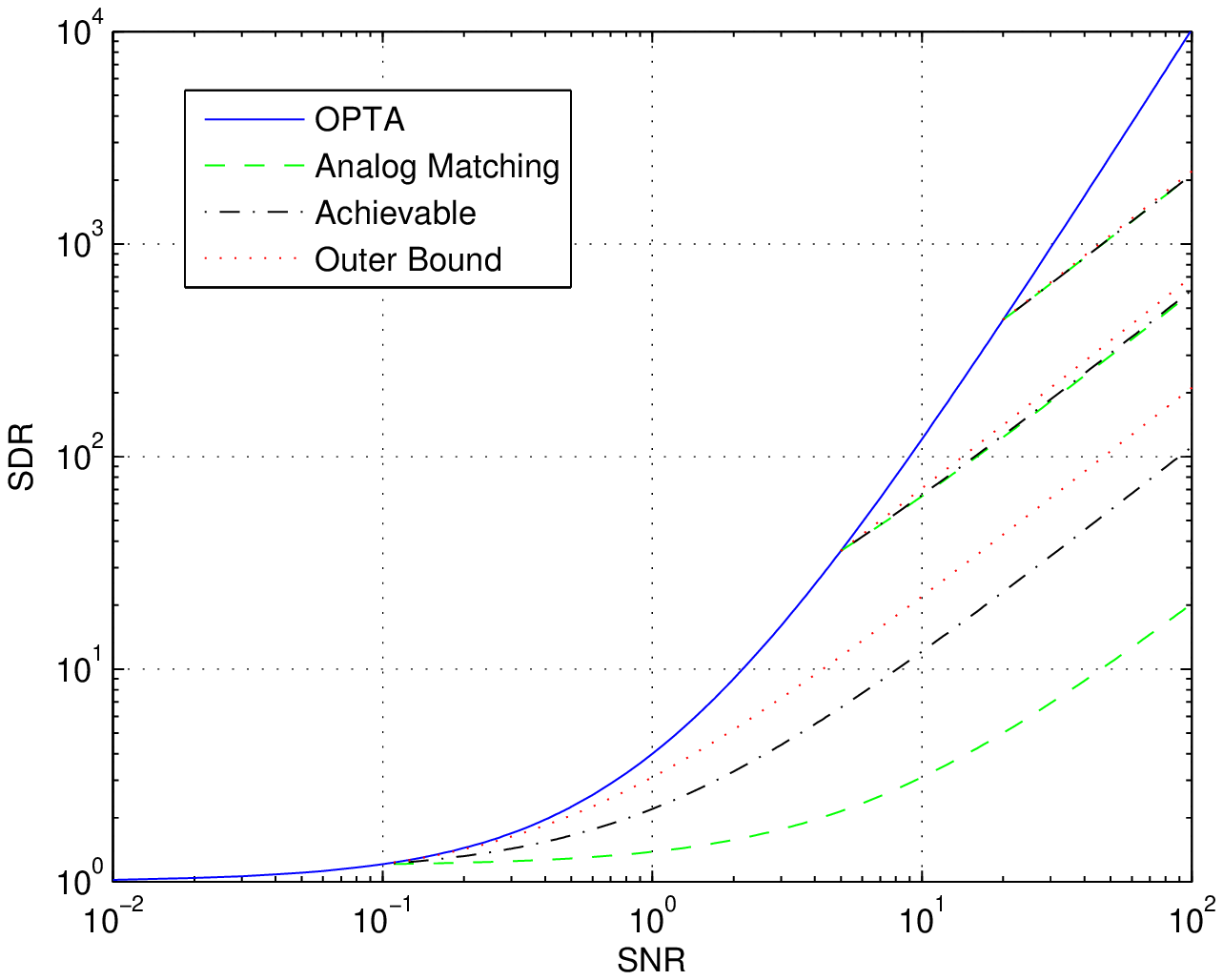, scale = 0.5}}
\
      \subfloat[$\rho=\frac{1}{2}$]{
          \leavevmode
      \epsfig{file=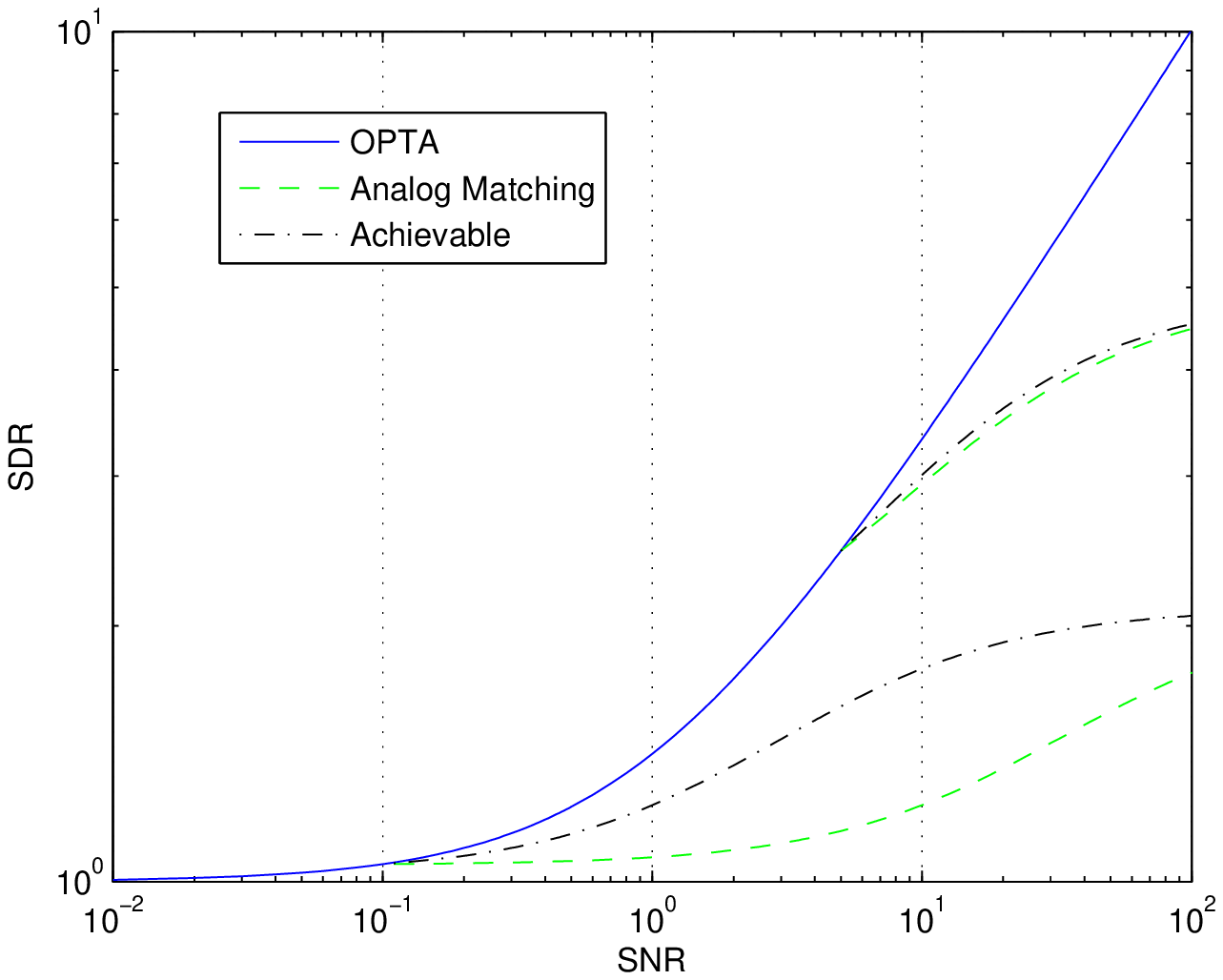, scale = 0.5}}}
\ccaption{Unknown SNR performance: BW expansion and compression. The best known achievable performance, brought for comparison, is due to  \cite{MittalPhamdo,TzvikaBroadcast}.}
\end{figure}

\textbf{Comparison of performance:} In comparison, the performance
reported by different methods in \cite{MittalPhamdo,TzvikaBroadcast}
for these cases has, in terms of \eqref{D_universal_expansion}:
\beq{TzvikaInner} \Phi_\rho(\tSNR,\tSNR_0) = (1+\tSNR)\cdot
(1+\tSNR_0)^{\rho - 1} - 1 \ \, \eeq while \cite{TzvikaBroadcast}
also proves an outer bound for BW expansion ($\rho>1$) on any scheme
which is optimal at some SNR: \beq{TzvikaOuter}
\Phi_\rho(\tSNR,\tSNR_0) = \frac{\tSNR}{\tSNR_0}
\Bigl[(1+\tSNR_0)^\rho-1\Bigr] \ \ . \eeq In both BW expansion and
compression, the Analog Matching scheme does not perform as good as
the previously reported schemes, although the difference vanishes
for high SNR. The basic drawback of analog matching compared to
methods developed specifically for these special cases seems to be,
that these methods apply different ``zooming'' to different source
or channel frequency bands, analog matching uses the same ``zooming
factor'' $\beta$ for all bands. Enhancements to the scheme, such as
the combination of analog matching with pure analog transmission,
may improve these results. \figref{expansion_fig} demonstrates these
results, for systems which are optimal at different SNR levels.

At high SNR, the performance of all these methods and of the outer
bound converge to: \beq{D_universal_expansion_high} \frac{1}{\SDR} =
\frac{1-\min(\rho,1)}{\SNR_0^\rho} +
\frac{\min(\rho,1)}{\SNR\cdot\SNR_0^{\rho-1}} \ \ . \eeq Thus the
Analog Matching scheme, as well as the schemes of
\cite{MittalPhamdo,TzvikaBroadcast}, are all asymptotically optimal
for high SNR among the schemes which achieve $\SDR^{opt}$ at some
SNR.

\subsection{Asymptotic Behavior with BW Change} \label{Sub_High}

Finally we turn back to the general case of non-white spectra with
any $\rho$, and examine it in the high-SNR regime. As in
\secref{Sub_Optimal}, we assume that the channel ISI filter is
known, corresponding with an equivalent noise spectrum $S_Z(\ej)$
known up to a scalar factor.

In the high-SNR limit, Lemma~\ref{lemma_universal} implies:
\beq{D_universal_expansion_gains} \frac{1}{\SDR} =
\left[\frac{1-\min(\rho,1)}{\SNR_0^\rho} +
\frac{\min(\rho,1)}{\SNR\cdot\SNR_0^{\rho-1}}\right] \Gamma_C
\Gamma_S \ \ . \eeq Comparing with
\eqref{D_universal_expansion_high}, we see that the color of the
source and of the noise determines a constant factor by which the
SDR is multiplied, but the dependence upon the SNR remains similar
to the white BW expansion/compression case. The following definition
formalizes this behavior (see \cite{LanemanDiversity}).

\vspace{5mm}
\begin{definition} \label{slope_def}
The \emph{distortion slope} of a continuum of SNR-dependent schemes
is : \beq{lambda} \lambda \Ddef \lim_{\tSNR \rightarrow \infty}
\frac{\log \SDR}{\log \tSNR}\ \eeq where $\SDR$ is the signal to
distortion attained at signal to noise ratio $\tSNR$, where the
limit is taken for a fixed channel filter with noise variance
approaching $0$. \vspace{3mm}
\end{definition}

We use the notation $\lambda=\lambda(\rho)$ in order to emphasize
the dependance of the asymptotic slope upon the bandwidth expansion
factor. The following follows directly from Proposition~\ref{RDeqC}.

\vspace{5mm}\begin{prop} For any source and channel spectra with BW
ratio $\rho$, and for a continuum of schemes achieving the OPTA
performance \eqref{OPTA},
\[ \lambda(\rho) = \rho \ \ . \]
\end{prop}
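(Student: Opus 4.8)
The plan is to deduce the result directly from the limiting statement in Proposition~\ref{RDeqC}. A continuum of schemes achieving the OPTA performance \eqref{OPTA} satisfies $\SDR=\SDR^{opt}$ at every signal-to-noise ratio, so by Definition~\ref{slope_def},
\[ \lambda(\rho) = \lim_{\tSNR\rightarrow\infty} \frac{\log \SDR^{opt}}{\log \tSNR} \ \ . \]
The entire proof is then the observation that $\SDR^{opt}$ grows like $\tSNR^\rho$ up to a constant factor.

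First I would reconcile the two ways of driving $\tSNR$ to infinity. In Definition~\ref{slope_def} the channel filter is held fixed and the noise variance is sent to zero, whereas Proposition~\ref{RDeqC} raises $\tSNR$ by increasing $P$ at a fixed noise spectrum. These are interchangeable for the present purpose: shrinking the innovations power $\var\{W_n\}$ of the equivalent ISI model \eqref{ISI} scales the colored-noise spectrum $S_Z(\ej)$ by a positive scalar, which leaves the prediction gains $\Gamma_S=\Gamma(S_S)$ and $\Gamma_C=\Gamma(S_Z)$ and the bandwidth ratio $\rho$ unchanged, since the gain \eqref{pred_gain} is scale-invariant, while $\tSNR=P/N$ still tends to infinity. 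Hence the hypothesis of the limiting part of Proposition~\ref{RDeqC} is met, and
\[ \lim_{\tSNR\rightarrow\infty} \frac{\SDR^{opt}}{\tSNR^\rho} = \Gamma_S \Gamma_C \ \ . \]

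Next I would take logarithms of this asymptotic equivalence. Since $\Gamma_S\Gamma_C$ is a strictly positive finite constant (both gains are at least one and finite under the assumed Paley-Wiener condition \eqref{Paley}), the convergence of the ratio to this constant gives
\[ \log \SDR^{opt} = \rho \log \tSNR + \log\bigl(\Gamma_S \Gamma_C\bigr) + o(1) \ \ , \]
as $\tSNR\rightarrow\infty$. Dividing by $\log \tSNR$ and letting $\tSNR\rightarrow\infty$, the constant term contributes $\log(\Gamma_S\Gamma_C)/\log\tSNR\rightarrow 0$, and we obtain $\lambda(\rho)=\rho$.

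The only nontrivial point is the reconciliation step above; once it is granted, the remainder is the elementary fact that a multiplicative constant in $\SDR^{opt}$ does not affect the logarithmic slope. Thus the color of the source and of the noise, captured entirely by the product $\Gamma_S\Gamma_C$, changes only the offset of $\log\SDR^{opt}$ and not the slope $\lambda$, which is governed solely by the bandwidth ratio $\rho$.
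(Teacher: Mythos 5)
Your proposal is correct and follows essentially the same route as the paper, which offers no separate argument beyond the remark that the proposition ``follows directly from Proposition~\ref{RDeqC}''; you simply make that direct deduction explicit by taking logarithms of the limit $\SDR^{opt}/\tSNR^\rho \rightarrow \Gamma_S\Gamma_C$ and observing that the finite positive constant $\Gamma_S\Gamma_C$ cannot affect the logarithmic slope. Your reconciliation of the two limiting regimes (fixed noise spectrum with growing $P$ versus fixed filter with vanishing noise variance, which differ only by a scalar rescaling of $S_Z(\ej)$ that leaves $\Gamma_S$, $\Gamma_C$ and $\rho$ invariant) is a careful detail the paper leaves implicit, but it does not change the nature of the argument.
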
\vspace{5mm}

As for an analog matching scheme which is optimal for a single SNR,
\eqref{D_universal_expansion_gains} implies: \vspace{5mm}
\begin{cor}\label{cor_Slope} For any source and channel spectra and
for a single analog-matching encoder,
\[ \lambda(\rho) = \left\{
\begin{array}{lll}
1 , & \mbox{if $\rho\geq 1$} \\
0 , & \mbox{otherwise}
\end{array}
\right\} \] is achievable.
\end{cor}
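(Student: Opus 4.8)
The plan is to obtain the corollary as a direct evaluation of the distortion-slope limit \eqref{lambda} from the closed-form high-SNR expression \eqref{D_universal_expansion_gains}, which Lemma~\ref{lemma_universal} furnishes as the achievable performance of a single analog-matching encoder designed for $\SNR_0$ together with an SNR-matched decoder. Since \eqref{lambda} takes the limit over a fixed channel filter whose noise variance tends to zero, the relevant scenario is precisely that of a fixed encoder confronting a growing actual $\SNR$, so I would fix the encoder (hence $\SNR_0$, $\rho$, $\Gamma_S$ and $\Gamma_C$) and regard $1/\SDR$ in \eqref{D_universal_expansion_gains} as a function of $\SNR$ alone.

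First I would treat the expansion regime $\rho\geq 1$. Here $\min(\rho,1)=1$, so the $\SNR$-independent term $\frac{1-\min(\rho,1)}{\SNR_0^\rho}$ vanishes and \eqref{D_universal_expansion_gains} reduces to $1/\SDR = \Gamma_C\Gamma_S/(\SNR\,\SNR_0^{\rho-1})$. Thus $\SDR$ is exactly proportional to $\SNR$, giving $\log\SDR = \log\SNR + c$ for a constant $c$ depending only on the fixed parameters, and \eqref{lambda} yields $\lambda(\rho)=1$.

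Next I would treat the compression regime $\rho<1$. Now $\min(\rho,1)=\rho$ and $1-\rho>0$, so the bracket in \eqref{D_universal_expansion_gains} retains a strictly positive $\SNR$-independent term $\frac{1-\rho}{\SNR_0^\rho}$, while the second term $\frac{\rho}{\SNR\,\SNR_0^{\rho-1}}$ tends to zero as $\SNR\to\infty$. Hence $1/\SDR \to (1-\rho)\Gamma_C\Gamma_S/\SNR_0^\rho$, a positive constant, so $\SDR$ saturates and $\log\SDR$ tends to a constant; \eqref{lambda} then gives $\lambda(\rho)=0$.

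The computation itself is routine; the substantive content resides entirely in Lemma~\ref{lemma_universal} and its high-SNR specialization \eqref{D_universal_expansion_gains}. The only point requiring care is the qualitative interpretation rather than any estimate: in the compression case the saturation (rather than unbounded growth) of $\SDR$ is exactly what forces the slope to zero, reflecting the fact already anticipated in \secref{Sub_BW_Change} that a single fixed analog-matching encoder, using one ``zooming factor'' $\beta$ across all bands, cannot convert additional channel SNR into reduced distortion once $\SNR$ exceeds the design value $\SNR_0$.
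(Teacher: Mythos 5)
Your proposal is correct and follows essentially the same route as the paper: the paper also obtains Corollary~\ref{cor_Slope} by direct substitution of the high-SNR expression \eqref{D_universal_expansion_gains} into the slope definition \eqref{lambda}, observing that for $\rho\geq 1$ the surviving term scales as $1/\SNR$ while for $\rho<1$ the $\SNR$-independent term dominates and the SDR saturates. Your explicit case split and the remark on saturation merely spell out what the paper leaves implicit.
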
\vspace{5mm}

This asymptotic slope agrees with the outer bound of
\cite{TzvikaBroadcast} for the (white) bandwidth expansion problem.
For the bandwidth compression problem, no outer bound is known, but
we are not aware of any proposed scheme with a non-zero asymptotic
slope. We believe this to be true for all spectra:

\vspace{5mm}\begin{conj}\label{conj_Slope} For any source and
channel spectra of BW ratio $\rho$, no single encoder which
satisfies \eqref{OPTA} at some $\tSNR_0$ can have a better slope
than that of Corollary \ref{cor_Slope}.
\end{conj}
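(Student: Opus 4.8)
The plan is to recast the statement as a converse for a degraded Gaussian broadcast problem. A single encoder that is SNR-independent but achieves the optimum \eqref{OPTA} at one operating point $\tSNR_0$ is exactly an encoder that must simultaneously serve a ``bad'' receiver at $\tSNR_0$ and a ``good'' receiver at the running $\tSNR\geq\tSNR_0$; since lowering the noise level produces a physically degraded channel, the output at $\tSNR_0$ can be written as the output at $\tSNR$ further corrupted by independent noise, $\bY_0=\bY+\bZ'$. The quantity to control is then the distortion pair $(D_0,D)$ attained by the two receivers from the \emph{same} channel input, and the conjecture amounts to showing that pinning $D_0=R^{-1}\!\bigl(C(\tSNR_0)\bigr)$ to the OPTA value forces $D$ at the good receiver to decay no faster than the rates implied by Corollary~\ref{cor_Slope}, i.e. $\lambda(\rho)\leq 1$ for $\rho\geq 1$ and $\lambda(\rho)\leq 0$ for $\rho<1$.

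First I would reduce the colored problem to a collection of white ones. Using the water-filling decomposition of Sections~\ref{Sub_SLB}--\ref{Sub_DPCM}, I would split source and channel into parallel frequency bands and argue, through Proposition~\ref{RDeqC} and the high-SNR identity \eqref{D_universal_expansion_gains}, that the spectral shaping enters the asymptotics only as the constant multiplicative factor $\Gamma_S\Gamma_C$; the slope \eqref{lambda} is invariant under multiplication of $\SDR$ by a constant, so it suffices to prove the bound band-by-band in the effectively white setting. Within a band I would set up the degradedness explicitly and apply a conditional entropy-power inequality to lower-bound the good receiver's distortion given the bad receiver's, for an \emph{arbitrary} (possibly nonlinear, non-Gaussian) encoder obeying the power constraint $P$. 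The resulting $(D_0,D)$ outer bound, specialized to $D_0=R^{-1}\!\bigl(C(\tSNR_0)\bigr)$, should reproduce \eqref{TzvikaOuter} for $\rho\geq 1$ and yield the vanishing slope for $\rho<1$.

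The expansion case $\rho\geq 1$ is the tractable one: here I can lean on the existing outer bound of \cite{TzvikaBroadcast}, and the only new work is pushing the entropy-power argument through the colored decomposition and verifying that the constant $\Gamma_S\Gamma_C$ does not corrupt $\lambda$. The genuine obstacle is the compression regime $\rho<1$, for which no outer bound is presently known. Here the entropy-power inequality is loose in the wrong direction: there is no clean dimension count forcing $\lambda=0$, and one must instead show that \emph{any} encoder optimal at $\tSNR_0$ necessarily leaves an irreducible interference/self-noise floor---the non-vanishing, input-dependent component of $\tilde Z_n$ noted at the end of \secref{sub_BW}---that caps $\SDR$ as $\tSNR\to\infty$, uniformly over all encoders and over all rearrangements (``modulations'') of the bands discussed in Remark~2 following Lemma~\ref{lemma_universal}. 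Turning that heuristic into a converse that holds for nonlinear encoders is, I expect, the crux, and is precisely why the statement is offered only as a conjecture.
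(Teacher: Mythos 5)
The first thing to say is that the paper does not prove this statement at all: it is stated explicitly as a conjecture (the text preceding it reads ``We believe this to be true for all spectra''), and the only supporting evidence offered is that the claimed slope agrees with the outer bound of \cite{TzvikaBroadcast} for the white bandwidth-expansion case, while for compression ``no outer bound is known.'' So there is no paper proof to compare against, and your proposal --- which you yourself concede does not close the argument --- leaves the statement exactly as open as the authors left it. Your framing as a degraded-broadcast converse is the natural one and matches the spirit of \eqref{TzvikaOuter}, and you correctly locate the crux in the $\rho<1$ regime. But two of your intermediate steps have genuine gaps beyond the one you flag.

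First, your reduction of the colored case to parallel white bands leans on \eqref{D_universal_expansion_gains} and Proposition~\ref{RDeqC} to argue that spectral shape enters only through the constant $\Gamma_S\Gamma_C$. But \eqref{D_universal_expansion_gains} is an \emph{achievability} expression describing what the Analog Matching decoder attains; it says nothing about what an arbitrary encoder could do, and a converse cannot inherit a constant-factor claim from an achievable scheme. One would need to show that no encoder can trade distortion across bands in an SNR-dependent way that improves the slope --- precisely the kind of cross-band ``zooming'' the paper itself notes (in the comparison after Corollary~\ref{cor_universal_expansion}) that other HDA schemes exploit. Second, even granting the white reduction, the conditional entropy-power inequality gives the $(D_0,D)$ tradeoff of \cite{TzvikaBroadcast} only for $\rho\geq 1$; for $\rho<1$ your proposed ``irreducible self-noise floor'' is a property of the AM scheme's residual ISI term $\tilde Z_n$, not of an arbitrary nonlinear encoder, so it cannot serve as the basis of a converse. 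In short: the statement remains a conjecture, your plan recovers the known partial evidence for $\rho\geq 1$ in the white case but does not extend it rigorously to colored spectra, and the $\rho<1$ converse --- the actual content of the conjecture beyond what \cite{TzvikaBroadcast} already establishes --- is untouched.
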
\vspace{5mm}

By this conjecture, the analog matching encoder is asymptotically
optimal among all encoders ideally matched to one SNR. It should be
noted, that schemes which do not satisfy optimality at one SNR
\emph{can} in fact approach the ideal slope $\lambda(\rho)=\rho$.
See e.g. approaches for integer $\rho$ such as bit interleaving
\cite{TaherzadehKhandaniNice}.

\section{Conclusion: Implementation and Applications} \label{conclusion}

We presented the Analog Matching scheme, which optimally transmits a
Gaussian source of any spectrum over a Gaussian channel of any
spectrum, without resorting to any data-bearing code. We showed the
advantage of such a scheme over a separation-based solution, in the
sense of robustness for unknown channel SNR.

The analysis we provided was asymptotic, in the sense that a
high-dimensional lattice is needed. However, 
unlike digital
transmission (and hybrid digital-analog schemes) where reduction of the code block length has a severe
impact on performance, the semi-analog approach offers a potential advantage in terms of block-length. An asymptotic figure
of merit where we expect this advantage to be revealed, is the excess-distortion exponent. Furthermore, the
modulo-lattice framework allows in
practice reduction to low-dimensional, even \emph{scalar} lattices,
with bounded loss.

One approach for scalar implementation of the Analog Matching
scheme, uses \emph{companding} \cite{ItaiThesis}. In this approach,
the scalar zooming factor $\beta$ is replaced by a non-linear
function which compresses the unbounded Gaussian source into a
finite range, an operation which is reverted at the decoder. There
is a problem here, since the entity which needs to be compressed is
actually the innovations process $\tilde Q_n$, unknown at the
encoder since it depends on the channel noise. This can be solved by
compressing $Q_n$, the innovations of the source itself; The effect
of this ``companding encoder-decoder mismatch'' vanishes in the
high-SNR limit. An altogether different approach, is to avoid
instantaneous decoding of the lattice; Instead, the decoder may at
each instance calculate the source prediction using several
hypothesis in parallel. The ambiguity will be solved in future instances,
possibly by a trellis-like algorithm.

In terms of delay, the AM scheme has an additional advantage over previously suggested HDA schemes. It is well known that time-domain approaches have a delay advantage over frequency-domain one, in both source and channel coding. A fully-causal DPCM, for example, can approach the RDF while only using causal filters, on the high-resolution limit. A sub-band coding scheme, in contrast, would have to use a delay-consuming DFT block; see e.g. \cite{Jayant84}. 

Finally, we remark that the AM scheme has further applications. It possesses the basic property, that it converts any colored channel to an
equivalent additive white noise channel of the same capacity as the
original channel, but of the source bandwidth. In the limit of
high-SNR, this equivalent noise becomes Gaussian and independent of
any encoder signal. This property is plausible in multi-user source,
channel and joint source/channel problems, in the presence of
bandwidth mismatch. Applications include computation over MACs
\cite{NazerGastpar07}, multi-sensor detection
\cite{ComputeForwardISIT} and transmission over the parallel relay
network \cite{RematchForwardISIT}.

\appendix

\subsection{Proof of Proposition~\ref{prop_lattice}}

By \cite[(200)]{UriRamiAWGN}, for each of the components $Z_l$:

\beq{Z_l_pdf} \frac{1}{n} \log \frac{f_{Z_l}(z)}{f_{Z'_l(z)}} \leq \epsilon(\Lambda_K) \eeq

where $f(\cdot)$ denotes a probability density function (pdf), $Z'_l$ is AWGN with the same variance as $Z_l$, and $\epsilon(\Lambda_K) \rightarrow 0$ as $K\rightarrow\infty$ for a sequence of lattices which is Rogers-good (i.e. lattices for which volume of the covering sphere approaches that of the Voronoi cell). 
Now assume without loss of generality that $\alpha_l^2$ is a non-increasing for $l>0$, and for some fixed $\delta$ let $L'$ be the minimal index such that \[ \sum_{l=L'+1}^\infty \alpha_l^2 \leq \delta \ \ . \]
Let $Z_\delta = \sum_{l=1}^{L'} \alpha_l Z_l$. Using \eqref{Z_l_pdf} and convolution of pdfs,

\[  \frac{1}{n} \log \frac{f_{Z_\delta}(z)}{f_{Z'_\delta(z)}} \leq \epsilon^{L'}(\Lambda_K) \ \ , \]
where $Z'_\delta$ is AWGN with the same variance as $Z_\delta$. Since $\epsilon^{L'}$ approaches zero as a function of $K$, 

\[ \lim_{K\rightarrow\infty} \Pr\{Z_0+Z_\delta \notin \Nu_K\} = 0 \ \ , \]  for lattices which are good for AWGN coding.

We are left with the ``tail'' $\tilde Z = \sum_{l=L'+1}^\infty \alpha_l Z_l$, which has variance $\delta$. By continuity arguments, \[ \lim_{\delta \rightarrow 0^+} \Pr \{Z_0 + Z_\delta + \tilde Z \notin \Nu_K  | Z_0 + Z_\delta \in \Nu_K \}  = 0 \ \ . \]

The result follows now by standard arguments of taking $\epsilon$ and $\delta$ to zero simultaneously.  We have assumed the use of a sequence of lattices that is simultaneously Rogers-good and AWGN-good. By \cite{GoodLattices}, such a sequence indeed exists.

\subsection{Proof of Lemma~\ref{lemma_AM1}}

By the properties of the modulo-lattice operation,
$\tilde X_n$ is a white process. Now the channel from $\tilde X_n$ to $\tilde Y_n$ is
identical to the channel of \eqref{Pred_C}, thus we
have that:
\[Y'_n = (\tilde X_n + \tilde Z_n)*(\delta_n-{p_C}_n) = \tilde X_n + I_n +
Z''_n \ \ , \] where $\tilde Z_n$ has spectrum $S_{\tilde Z}(\ej)$ \eqref{S_tildeZ}, and consequently $Z''_n = \tilde Z_n*(\delta_n-{p_C}_n)$
is its white prediction error, with variance $\frac{1-\alpha}{\alpha}
\theta_C$ according to \eqref{channel_exact}. Now since $\tilde Y_n
= Y'_n - I_n$ is the optimum linear estimator for $\tilde X_n$ from the
channel output, the orthogonality principle dictates that the
estimation error is uncorrelated with the process $Y'_n$, resulting
in an additive backward channel (see e.g.
\cite{ZamirKochmanErezDPCM-IT}):
\[\tilde X_n = Y'_n - I_n + Z''_n \ \ . \] Switching back to
a forward channel, we have \[ Y'_n = \alpha (\tilde X_n + Z'_n) +
I_n \ \ ,\] where $Z'_n$ is white with the same variance as $Z''_n$. Furthermore, since $Z'_n$ is a function of the processes $\{\tilde X_n\}$ and $\{Z_n\}$, it is independent of all $U_n$.

\subsection{Proof of Lemma~\ref{lemma_AM}}

By the properties of the modulo-lattice operation, \[T_n 
 = \beta (U_n - J_n) + {Z_{eq}}_n \ \ , \] resulting in the equivalent channel of \figref{finite_k_matching_fig}. By the correct initialization assumption, \eqref{ideal_additive} holds for all past instances, thus $\bT_n$ is a combination noise (see Definition~\ref{def_Combination}). In light of Proposition~\ref{prop_lattice}, it is only left to show that the variance of $T_n$ is strictly less than the lattice second moment $\theta_C$. To that end, note that under the correct initialization assumption, the past samples of the process $V_n$ indeed behave as samples of a stationary process of spectrum $S_V(\ej)$ \eqref{S_V}, for which 
$P_S(\ej)$ is the optimal predictor. It follows that $U_n-J_n$ is white, with variance \beqn{long_in_proof}
\var\{U_n-J_n\} &=& \var\{U_n | V_{n-1},V_{n-2},\ldots\} \nonumber
\\ &\stackrel{(a)}=& P_e \Bigl(S_U + \frac{\var\{{Z_{eq}}_n\}}{\beta^2}\Bigr) -
\frac{\var\{{Z_{eq}}_n\}}{\beta^2} \nonumber
\\ &=& P_e \Bigl(S_U +
\frac{\beta_0^2}{\beta^2} \theta_S \Bigr) -
\frac{\beta_0^2}{\beta^2} \theta_S \nonumber
\\ &<& \frac{\beta_0^2}{\beta^2}
\Bigl[P_e(S_U+\theta_S)-\theta_S\Bigr] \nonumber
\\ &\stackrel{(b)}=& \frac{\beta_0^2}{\beta^2}
\cdot \frac {\alpha}{1-\alpha} \theta_S \nonumber
\\ &=& \frac{\alpha
\theta_C}{\beta^2} \ \ , \nonumber \eeqn where $(a)$ holds by
\eqref{noisy_prediction}, and $(b)$ holds by applying the same in the
opposite direction, combined with \eqref{source_exact}. By the
whiteness of ${Z_{eq}}_n$ and its independence of all $U_n$, we have
that $U_n-J_n$ is independent of ${Z_{eq}}_n$, thus the variance of
$T_n$ is given by \beq{AM_Condition1} \var\{T_n\} = \beta^2
\var\{U_n-J_n\} + \var\{{Z_{eq}}_n\} < \theta_C \ \ . \eeq The
margin from $\theta_C$ depends on the margin in the inequality in
the chain above, which depends only on $S_U(\ej)$, $\theta_C$ and
$\beta$, and is strictly positive for all $\beta<\beta_0$.

\subsection{Proof of Lemma~\ref{lemma_universal}}

We work with the optimum Analog Matching encoder for the noise
spectrum $S_{Z0}(\ej)$. At the decoder, we note that for any choice
of the channel post-filter $G_2(\ej)$, we have that the equivalent
noise $\nk{Z_{eq}}$ is the noise $\nk{\tilde Z}\Ddef \nk{\tilde Y} -
\nk{\tilde X}$ passed through the filter $1-P_C(\ej)$. Consequently,
this noise has spectrum: \[ S_{eq}(\ej) = S_{\tilde Z}(\ej)
|1-P_C(\ej)|^2 \ \ . \] The filter $G_2(\ej)$ should, therefore, be
the Wiener filter which minimizes $S_{\tilde Z}(\ej)$ at each
frequency. This filter achieves a noise spectrum \[S_{\tilde Z}(\ej)
= \frac{\theta_C-S_{Z0}(\ej)}{\theta_C-S_{Z0}(\ej)+S_Z(\ej)}
S_Z(\ej) \] inside $\cF_C$, and $\theta_C$ outside. Denoting the
variance of the (white) equivalent noise in the case
$S_{Z0}(\ej)=S_Z(\ej)$ as $S_0=(1-\alpha)\theta_C$ \eqref{var_Z_eq}, we find
that:
\[|1-P_C(\ej)|^2 = \frac{S_0
\theta_C}{(\theta_C-S_{Z0}(\ej)) S_{Z0}(\ej)} \] inside $\cF_C$, and
$S_0/\theta_C$ outside. We conclude that we have
equivalent channel noise with spectrum
\beq{S_eq_robust} S_{eq}(\ej) = \frac{S_Z(\ej)}{S_{Z0}(\ej)} \cdot
\frac{\theta_C}{\theta_C-S_{Z0}(\ej)+S_Z(\ej)} S_0 =
\frac{S_S(\ej)-\theta_S}{\Phi(\ej) \theta_S} S_0 \eeq inside
$\cF_C$, and $S_0$ outside. Now, since this spectrum is
everywhere upper-bounded by $S_0$, we need not worry about
correct decoding. The source post-filter input is the
source, corrupted by an additive noise ${Z_{eq}}_n/\beta$,
with spectrum arbitrarily close to
\[\frac{S_{eq}(\ej)}{\beta_0^2} = \frac{S_S(\ej)-\theta_S}{\Phi(\ej)} \ \  \] inside
$\cF_C$, and $\theta_S$ outside. Now again we face optimal linear
filtering, and we replace the source post-filter $F_2(\ej)$ by the
Wiener filter for the source, to arrive at the desired result.
\bibliography{../Latex/mybib}
\bibliographystyle{plain}

\end{document}